\newtheorem{theorem}{Theorem}
\newtheorem{definition}{Definition}
\newtheorem{example}{Example}
\newtheorem{lemma}{Lemma}
\begin{document}
\bibliographystyle{IEEE}
\title{The Meaning of Structure in Interconnected Dynamic Systems \thanks{For correspondence, please contact \texttt{eyeung@caltech.edu} or \texttt{sean.warnick@gmail.com}. $\ddagger$ Control and Dynamical Systems, California Institute of Technology; $\dagger$ Information and Decision Algorithms Laboratories, Brigham Young University, Provo, UT; $*$ Control Group, Department of Engineering, University of Cambridge, Cambridge, UK; $\circ$ Automatic Control Laboratory, KTH School of Electrical Engineering, Stockholm, SE.  }
}
\author{E. Yeung$^\ddagger$, J. Gon\c{c}alves$^*$, H. Sandberg$^\circ$, S. Warnick$^\dagger$}

\maketitle

\begin{abstract}
Interconnected dynamic systems are a pervasive component of our modern infrastructures.  The complexity of such systems can be staggering, which motivates simplified representations for their manipulation and analysis.  This work introduces the complete computational structure of a system as a common baseline for comparing different simplified representations.  Linear systems are then used as a vehicle for comparing and contrasting distinct partial structure representations.  Such representations simplify the description of a system's complete computational structure at various levels of fidelity while retaining a full description of the system's input-output dynamic behavior.  Relationships between these various partial structure representations are detailed, and the landscape of new realization, minimality, and model reduction problems introduced by these representations is briefly surveyed.
\end{abstract}

\section{Introduction}

Structure and dynamic behavior are two of the most fundamental concepts characterizing a system.   The interplay between these concepts has been a central theme in control as far back as Black's, Bode's, or Nyquist's work on feedback amplifiers \cite{nyquist1932,black1934,bode1940}, or even as early as Maxwell's analysis {\em On Governors} in 1868 \cite{maxwell1868}. 

The key property driving such analyses is the fact that an interconnection of systems yields another system.  This property suggests a natural notion of structure, as the interconnection of systems, and focuses attention on understanding how interconnections of different systems result in varieties of dynamic behaviors.      

This idea of structure as interconnection is not only critical in the analysis of systems, but it also plays a key role in system modeling.  While black box approaches to modeling seek to duplicate the input-output behavior of a system irrespective of structure, first principles approaches to modeling use the visible interconnection structure of a system to decompose the system into smaller, fundamental components.  Constitutive relations describing each of these components are then applied, and interconnection principles linking these relations then result in a model structurally and dynamically consistent with the original system. 

While such approaches have demonstrated remarkable success in electrical and mechanical domains where system components are readily visible and physically separated from each other \cite{RazaCSM96,FowlerDAndreaCSM03,DAndrea}, application of such methods to biological, social, and other domains has been more difficult.  One reason may be that these systems do not exhibit a natural structure in the same sense that previous applications have; while components of electrical and mechanical systems are compartmentalized and solid-state, the physical relationship among components of these other systems are often much more fluid \cite{DelVecchio08,DelVecchio09}.  Perhaps for these other domains different notions of structure play the role historically occupied by the interconnection of components.

This paper explores these ideas by characterizing the complete computational structure of a system and then contrasting it with three distinct partial structure representations.  These different representations include the interconnection of subsystems and the standard idea of a transfer function matrix, but it also includes a newer concept of system structure called {\em signal structure} that appears to be especially useful for characterizing systems that are difficult to compartmentalize.  Precise relationships between these various perspectives of system structure are then provided, along with a brief discussion on their implications for various questions about realization and approximation.

\section{Complete Computational Structure}

The complete computational structure of a system characterizes the actual processes it uses to sense properties of its environment, represent and store variables internally, and affect change externally.  At the core of these processes are information retrieval issues such as the encoding, storage, and decoding of quantities that drive the system's dynamics.  Different mechanisms for handling these quantities result in different system structures.  

Mathematically, state equations, or their generalization as descriptor systems \cite{luenberger78,luenberger,luenberger77} are typically used to describe these mechanisms.  Although there may be many realizations that describe the same input-output properties of a particular system, its complete computational structure is the architecture of the {\em particular realization} fundamentally used to store state variables in memory and transform system inputs to the corresponding outputs.  In this work we will focus our attention on a class of differential algebraic systems that are equivalent to a set of ordinary differential equations in state space form; we will refer to such equations as {\em generalized state equations}.

Representing a system's complete computational structure is thus a question of graphically representing the structure implied by the equations that govern its state evolution.  In this work, rather than focusing on the specific syntax of any one particular graphical modeling language, we will draw from the standard system theoretic notions of a {\em block diagram} and a {\em signal flow graph} to conduct a concrete analysis between graphical representations of a system at various levels of abstraction.  The complete computational structure of a system, then, is the description of the system with the most refined resolution, which we will characterize as a graph derived from a particular block diagram of the generalized state equations.   

To make this concept of structure precise, we begin by considering a system $G$ with generalized state space realization 
\begin{equation}
\label{eq:basicsystem}
\begin{array}{rcl}
\dot{x}&=&f(x,w,u),\\
w&=&g(x,w,u),\\
y&=&h(x,w,u).
\end{array}
\end{equation}
Note that this system is in the form of a differential algebraic equation, although we will only consider systems with differentiation index zero, implying that  (\ref{eq:basicsystem}) is always equivalent to a standard ordinary differential or difference equation of the same order \cite{schmidt}.  Typically we may consider the system (\ref{eq:basicsystem}) to be defined over continuous time, with $t\in\mathbb R \geq 0$, and with $u\in\mathbb R^m$, $x\in\mathbb R^n$, $w\in\mathbb R^l$, $y\in\mathbb R^p$, and $\dot{x}$ taken to mean $dx/dt$.  Moreover, we restrict our attention to those functions $f$, $g$ and $h$ where solutions exist for $t\geq 0$.  Nevertheless, we could also consider discrete time systems, with $t=0,1,2,3,...$ and $\dot{x}$ in (\ref{eq:basicsystem}) taken to mean $x[t+1]$, or systems with general input, state, auxiliary, and output spaces $\cal U$, $\cal X$, $\cal W$, or $\cal Y$, respectively.  In some situations these ``spaces" may merely be sets, e.g. ${\cal X}=\{0,1\}$.  In any case, however, we will take $u\in {\cal U}^m$, $x\in {\cal X}^n$, $w\in{\cal W}^l$, and $y\in{\cal Y}^p$ so that $m$, $n$, $l$ and $p$ characterize the dimensions of the input, state, auxiliary and output vectors, respectively.

Note that the auxiliary variables, $w$, are used to characterize intermediate computation in the composition of functions.  Thus, for example, we distinguish between $f(x)=x$ and $f(x)=2(.5x)$ by computing the latter as $f(w)=2w$ and $w=g(x)=.5x$.  In this way, the auxiliary variables serve to identify stages in the computation of the state space realization (\ref{eq:basicsystem}).  Frequently we may not require any auxiliary variables in our description of the system; indeed it is the standard practice \cite{willems07} to eliminate auxiliary variables to simplify the state descriptions of systems.  Nevertheless, as we discuss structure, it will be critical to use auxiliary variables to distinguish between systems with dynamically equivalent, yet structurally distinct architectures, leading to the following definition.

\begin{definition} Given a system (\ref{eq:basicsystem}), we call the number of auxiliary variables, $l$, the {\em intricacy} of the realization.
\end{definition}

To understand the structure of (\ref{eq:basicsystem}), we need a notion of dependence of a function on its arguments.  For example, the function $f(x,y,z) = xy-x+z$ clearly depends on $z$, but it only depends on $x$ when $y\neq 1$ (or on $y$ when $x\neq 0$).  Since ``structure" refers at some level to the dependence of the system variables on each other, it is important that our notion of dependence be made clear.

\begin{definition}\label{definition:dependence} A function $f(w)$, from $l$-dimensional domain $\cal W$ to $s$-dimensional co-domain $\cal Z$, is said to {\em depend} on the $i^{th}$ variable, $w_i$, if there exist values of the other $s-1$ variables $w_j$, $j\neq i$, such that $f(w)$ is not constant over all values of $w_i$ while holding the others variables fixed.  If $s=1$, then $f(w)$ {\em depends} on $w$ if it is not constant over all values of $w$.
\end{definition}

Note that when $\partial f/\partial w_i$ is well defined, the above definition coincides with the partial derivative being non-zero for some value of the variables $w_j$.  Nevertheless, here we allow for non-differentiable functions as we explicitly characterize one notion of the structure of a state space realization.

\begin{definition} Given a system $G$ with realization (\ref{eq:basicsystem}), its {\em complete} or {\em computational structure} is a weighted directed graph $\cal C$ with vertex set $V(\cal C)$, and edge set $E(\cal C)$.  The vertex set contains $m+n+l+p$ elements, with one for each input, state, auxiliary, and output variable of the system; we label the vertices accordingly.  In particular, the vertex associated with the $i^{th}$ input is labeled $u_i$, $1\leq i\leq m$, the vertex associated with the $j^{th}$ state is labeled $f_j$, $0\le j\le n$, the vertex associated with the  $k^{th}$ auxiliary variable is labeled $g_k$, $0\le k \le l$, and the vertex associated with the $r^{th}$ output is labeled $h_r$, $1\le r\leq p$. The edge set contains an edge from node $i$ to node $j$ if the function associated with the label of node $j$ depends on the variable produced by node $i$.  Moreover, the edge $(i,j)$ is then labeled (weighted) with the variable produced by node $i$.
\end{definition}

So, for example, consider the following continuous time system with real-valued variables:
\begin{equation}
\label{eq:simple_example}
\begin{array}{ccl}
\left[\begin{array}{c}\dot{x}_1\\\dot{x}_2\end{array}\right]&=&\left[\begin{array}{l}f_1(x_1,w, u)\\f_2(x_1)\end{array}\right],\\
w&=&\begin{array}{l}g(x_2)\end{array}\\
y&=&\begin{array}{l}h(x_2)\end{array}.
\end{array}
\end{equation}
Its complete computational structure has node set $V({\cal C})=\{u,f_1,f_2,g,h\}$ and edge set $E({\cal C})=\{u(u,f_1), x_1(f_1,f_1), x_1(f_1,f_2),x_2(f_2,g),x_2(f_2,h),w(g,f_1)\}$; Figure \ref{fig:simple_example} illustrates the graph.  

\begin{figure}[htbp]
\centering
\subfigure[The complete computational structure $\cal C$ of the simple example specified by equation (\ref{eq:simple_example}).]{
          \includegraphics*[width=.45\textwidth, viewport = 130 200 650 415]{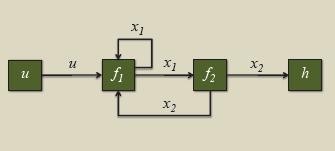}
          \label{fig:simple_example_a}
          }\\
     \subfigure[A modified representation of the complete computational structure of the system specified by equation (\ref{eq:simple_example}).  Since the edges leaving a particular node will always represent the same variable, they have been combined to simplify the figure.]{
          \includegraphics*[width=.45\textwidth, viewport = 130 200 650 415]{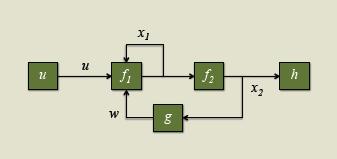}
          \label{fig:simple_example_b}
          }
\caption{The complete computational structure of the state realization of a system is a graph demonstrating the dependency among all system variables; edges correspond to variables and nodes represent constitutive mechanisms that produce each variable.  System outputs are understood to leave the corresponding terminal nodes, and system inputs arrive at the corresponding source nodes.}
\label{fig:simple_example}
\end{figure}

This notion of the complete computational structure of the system (\ref{eq:basicsystem}) corresponds with the traditional idea of the structure of a dynamic system (see for example \cite{Siljac,}), but with some important differences.  First, this description uses auxiliary variables to keep track of the structural differences introduced by the composition of functions. This allows a degree of flexibility in how refined a view of the computational structure one considers ``complete."  Also, this description may have some slight differences in the labeling of nodes and edges compared with various descriptions in the literature \cite{Siljac,murrayCON,roy09,Parlangeli,willems07}.  These differences will become important as new notions of partial structure are introduced in later sections.  Here, we use rectangular nodes for graphs where the nodes represent {\em systems}, and the associated edges will represent {\em signals}.  This convention will bridge well between the graphical representation of system structure and the typical engineering block diagram of a system, and it sometimes motivates the simplification in drawing edges as shown in Figure \ref{fig:simple_example_b}), since every edge that leaves a node represents the same variable and carries the same label.  Moreover, notice that nodes associated with the mechanisms that produce output variables are {\em terminal}, in that no edge ever leaves these nodes, while the nodes associated with input variables are {\em sources}, in that no edge ever arrives at these nodes.  Although it is common for engineering diagrams to explicitly draw the edges associated with output variables and leave them ``dangling," with no explicit terminal node, or to eliminate the input nodes and simply depict the input edges--also ``dangling," our convention ensures that the diagram corresponds to a well defined {\em graph}, with every edge characterized by an ordered pair of nodes.  Note also that state nodes, such as $f_1$ in Figure \ref{fig:simple_example}, may have self loops, although auxiliary nodes will not, and at times it will be convenient to partition the vertex set into groups corresponding to the input, state, auxiliary, and output mechanisms as $V({\cal C}) = \{V_u({\cal C}), V_x({\cal C}), V_w({\cal C}), V_y({\cal C})\}$.  Likewise, we may similarly partition the edge set as necessary.

We see, then, that knowing the complete structure $\cal C$ of a system is equivalent to knowing its state space realization, along with the composition structure with which these functions are represented, given by $(f,g,h)$.  We refer to this structure as {\em computational} because it reveals the dependencies among variables in the particular representation, or basis, that they are stored in and retrieved from memory.  These specific, physical mechanisms that store and retrieve information, identified with $V_x({\cal C})$, are interconnected with devices that transform variables, identified with $V_w({\cal C})$, and with devices that interface with the system's external environment.  These devices include sensors, identified with $V_u({\cal C})$, and actuators, identified with $V_y({\cal C})$, to implement the particular system behavior observed by the outside world through the {\em manifest} variables, $u$ and $y$.  Although other technologies very well may implement the same observed behavior via a different computational structure and a different representation of the {\em hidden} variables, $x$ and $w$, $\cal C$ describes the structure of the actual system employing existing technologies as captured through a particular state description.  In this sense, $\cal C$ is the complete architecture of the system, and often may be interpreted as the system's ``physical layer."  Importantly, it is often this notion of structure, or a very related concept, that is meant when discussing the ``structure" of a system, as the next example illustrates.  

\subsection{Example: Graph Dynamical Systems}

As an example, we examine the computational structure of a graph dynamical system (GDS).  Graph dynamical systems are finite alphabet, discrete time systems with dynamics defined in terms of the structure of an associated undirected graph.  They have been employed in the study of various complex systems \cite{Mortveit01,Mortveit2000}, including
\begin{itemize}
\item Dynamical process on networks:
\begin{itemize}
\item disease propagation over a social contact graph, 
\item packet flow in cell phone communication, 
\item urban traffic and transportation;
\end{itemize}
\item Computational algorithms:
\begin{itemize}
\item Gauss-Seidel, 
\item gene annotation based on functional linkage networks, 
\item transport computations on irregular grids;
\end{itemize}
\item Computational paradigms related to distributed computing.
\end{itemize}
Here we observe that the computational structure of the graph dynamical system corresponds naturally with the system's underlying graph. 

Given an undirected graph $\cal G$ with vertex set $V({\cal G})=\{1,2,...,n\}$, a GDS associates with each node $i$ a state $x_i$ that takes its values from a specified finite set $\cal X$.  This state is then assigned a particular update function that updates its value according to the values of states associated with nodes adjacent to node $i$ on $\cal G$.  Notice that this restriction on the update function suggests that the update function for state $i$ {\em depends} on states consistent with the structure of $\cal G$, indicating that the system's computational structure $\cal C$ should correspond to the adjacency structure of $\cal G$.

The distinction is made between a GDS that updates all of its states simultaneously, called a {\em parallel GDS}, and one that updates its states in a particular sequence, called a {\em sequential GDS}.  The parallel GDS thus becomes an autonomous dynamical system, evolving its state according to its update function from some initial condition.  The sequential GDS, on the other hand, can be viewed as a controlled system that receives a particular permutation of the node set $V({\cal G})$ as input and then evolves its states accordingly. 
\begin{figure}[htbp]
\centering
\subfigure[Undirected graph $\cal G$ defining the sequential Graph Dynamical System (\ref{eq:GDSexample}).]
					{\includegraphics*[width=.45\textwidth, viewport = 250 230 540 380]{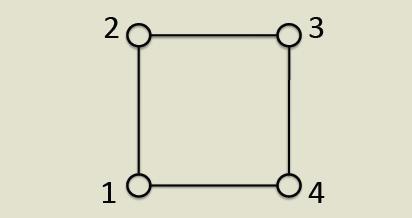}
					\label{fig:GDSa}}
\subfigure[Computational structure $\cal C$ of the sequential Graph Dynamical System (\ref{eq:GDSexample}).]
       {\includegraphics*[width=.45\textwidth, viewport = 120 140 670 470]{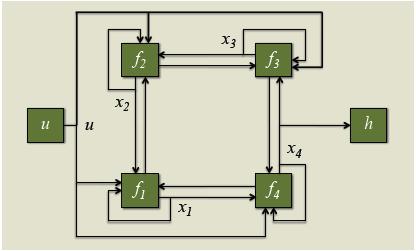}
          \label{fig:GDSb}}
 \caption{Graph Dynamical Systems are finite alphabet discrete time systems with dynamics defined in terms of the adjacency structure of a specified undirected graph.  Here we note that the computational structure of the system reflects the structure of its underlying graph. }
\label{fig:GDS}
\end{figure}

To illustrate, consider the sequential GDS given by ${\cal G}=\{1,2,3,4\}$ as indicated in Figure \ref{fig:GDS}.  Let ${\cal U}=\{1,2,3,4\}$ and ${\cal X}={\cal Y} = \{0,1\}$, with update function given by
\begin{equation}
\label{eq:GDSexample}
\begin{array}{ccc}
\left[\begin{array}{c}x_1[t+1]\\x_2[t+1]\\x_3[t+1]\\x_4[t+1]\end{array}\right] & = &\left[\begin{array}{c}f_1(x[t],u[t])\\f_2(x[t],u[t])\\f_3(x[t],u[t])\\f_4(x[t],u[t])\end{array}\right]\\\\
y[t] & = & x_4[t]
\end{array}
\end{equation}
where
\begin{equation}
\label{eq:update}
f_i(x,u) = \left\{\begin{array}{ll} x_i & u\neq i\\ (1+x_i)(1+x_{i-1})(1+x_{i+1})&u=i\end{array}\right.
\end{equation}
and the arithmetic in (\ref{eq:update}) is taken modulo 2, while that in the subscript notation is taken modulo 4 (resulting in $x_5\equiv x_1$, etc.).  So, for example, the initial condition $x[0] = [0\;0\;0\;0]^T$ with input sequence \hbox{$u[t] = 1, 2, 3, 4, 1, 2, 3, 4,...$} would result in the following periodic trajectory:
\begin{equation*}
\begin{array}{cc}
\begin{array}{c}
x[1] = [1\;0\;0\;0]^T\\
\downarrow\\
x[2] = [1\;0\;0\;0]^T\\
\downarrow\\
x[3] = [1\;0\;1\;0]^T\\
\downarrow\\
x[4] = [1\;0\;1\;0]^T\\
\downarrow\\
x[8] = [0\;0\;0\;1]^T
\end{array}
&
\begin{array}{c}
x[12] = [0\;1\;0\;0]^T\\
\downarrow\\
x[16] = [0\;0\;1\;0]^T\\
\downarrow\\
x[20] = [1\;0\;0\;0]^T\\
\downarrow\\
x[24] = [0\;1\;0\;1]^T\\
\downarrow\\
x[28] = [0\;0\;0\;0]^T
\end{array}
\end{array}
\end{equation*}

The computational structure of the system (\ref{eq:GDSexample}) follows immediately from the dependency among variables characterized by equation (\ref{eq:update});  Figure \ref{fig:GDS} illustrates $\cal C$ for this system.  Notice that the structure of $\cal G$ is reflected in $\cal C$, where the undirected edges in $\cal G$ have been replaced by directed edges in both directions, and self-loops have appeared where applicable.  Moreover, notice that $\cal C$ considers the explicit influence of the input sequence on the update computation, and explicitly identifies the system output.  In this way, we see that the computational structure is a reflection of the natural structure of the sequential GDS.

\subsection{Computational Structure of Linear Systems}
Linear systems represent an important special case of those described by (\ref{eq:basicsystem}).  They arise naturally as the linearization of sufficiently smooth nonlinear dynamics near an equilibrium point or limit cycle, or as the fundamental dynamics of systems engineered to behave linearly under nominal operating conditions.  In either case, knowing the structure of the relevant linear system is a critical first step to understanding that of the underlying nonlinear phenomena.

The general state description of a linear system is given by
\begin{equation}
\label{eq:linearsystem}
\begin{array}{ccl}
\dot{x}&=&Ax+\hat{A}w+Bu,\\
w&=&\bar{A}x+\tilde{A}w+\bar{B}u,\\
y &=&Cx+\bar{C}w+Du,
\end{array}
\end{equation}
where $A\in \mathbb R^{n\times n}$, $\hat{A}\in \mathbb R^{n\times l}$, $\bar{A}\in \mathbb R^{l\times n}$, $\tilde{A}\in \mathbb R^{l\times l}$, $B\in \mathbb R^{n\times m}$, $\bar{B}\in \mathbb R^{l\times m}$, $C\in \mathbb R^{p\times n}$, $\bar{C}\in \mathbb R^{p\times l}$, and $D\in \mathbb R^{p\times m}$.  Note that $I-\tilde{A}$ is necessarily invertible, ensuring that the differentiability index of the system is zero.  Nevertheless, the matrices are otherwise free.  

As in the nonlinear case, it should be apparent that the auxiliary variables are superfluous in terms of characterizing the dynamic behavior of the system; this idea is made precise in the following lemma.  Nevertheless, the auxiliary variables make a very important difference in terms of characterizing the system's complete computational structure, as illustrated by the subsequent example.  

\begin{lemma}  For any system (\ref{eq:linearsystem}) with intricacy $l>0$, there exists a unique {\em minimal intricacy} realization $(A_o,B_o,C_o,D_o)$ with $l=0$ such that for every solution $(u(t),x(t),w(t),y(t))$ of ($\ref{eq:linearsystem}$), $(u(t),x(t),y(t))$ is  a solution of $(A_o,B_o,C_o,D_o)$.
\label{minintrlemma}
\end{lemma}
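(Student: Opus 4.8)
The plan is to exploit the invertibility of $I-\tilde{A}$, which is guaranteed by the hypothesis that the differentiation index is zero, in order to eliminate the auxiliary variable $w$ algebraically. The middle equation of (\ref{eq:linearsystem}) reads $w=\bar{A}x+\tilde{A}w+\bar{B}u$, so I would rearrange it as $(I-\tilde{A})w=\bar{A}x+\bar{B}u$ and solve for $w=(I-\tilde{A})^{-1}(\bar{A}x+\bar{B}u)$. This expresses $w$ at each instant as a static function of $x$ and $u$ along any solution of (\ref{eq:linearsystem}).

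Next I would substitute this expression into the first and third equations of (\ref{eq:linearsystem}). This yields $\dot{x}=A_ox+B_ou$ and $y=C_ox+D_ou$ with
\begin{equation*}
\begin{array}{ll}
A_o=A+\hat{A}(I-\tilde{A})^{-1}\bar{A}, & B_o=B+\hat{A}(I-\tilde{A})^{-1}\bar{B},\\
C_o=C+\bar{C}(I-\tilde{A})^{-1}\bar{A}, & D_o=D+\bar{C}(I-\tilde{A})^{-1}\bar{B}.
\end{array}
\end{equation*}
By construction, if $(u(t),x(t),w(t),y(t))$ solves (\ref{eq:linearsystem}), then $w(t)$ necessarily equals $(I-\tilde{A})^{-1}(\bar{A}x(t)+\bar{B}u(t))$, so the pair $(x(t),u(t))$ together with the corresponding $y(t)$ satisfies the reduced equations; hence $(u(t),x(t),y(t))$ is a solution of $(A_o,B_o,C_o,D_o)$. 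This establishes existence, and the reduced system, being an ordinary linear state equation, certainly admits solutions for $t\geq 0$.

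For uniqueness I would argue that any $l=0$ realization $(A_1,B_1,C_1,D_1)$ with the stated property must coincide with $(A_o,B_o,C_o,D_o)$. The key observation is that for every $(\xi,\mu)\in\mathbb{R}^n\times\mathbb{R}^m$ there is a solution of (\ref{eq:linearsystem}) with $x(0)=\xi$ and $u(0)=\mu$ (take the input to be the constant $\mu$ near $t=0$, let $w$ be the algebraically determined value above, and integrate $\dot x$). Evaluating $\dot{x}=A_1x+B_1u=A_ox+B_ou$ and $y=C_1x+D_1u=C_ox+D_ou$ at $t=0$ forces $(A_1-A_o)\xi+(B_1-B_o)\mu=0$ and $(C_1-C_o)\xi+(D_1-D_o)\mu=0$ for all $\xi,\mu$, and therefore $A_1=A_o$, $B_1=B_o$, $C_1=C_o$, $D_1=D_o$.

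The computation itself is entirely routine linear algebra; the only point requiring care is the uniqueness step, where one must make precise that the state and input of (\ref{eq:linearsystem}) are jointly free enough — i.e., that $(x(t),u(t))$ can be made to take any prescribed value at some instant — so that matching the reduced dynamics along all solutions pins down the four matrices. I expect this to be the main (though minor) obstacle; everything else follows immediately from the invertibility of $I-\tilde{A}$.
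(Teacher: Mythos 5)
Your proposal is correct and follows essentially the same route as the paper: invoke the invertibility of $I-\tilde{A}$, solve the algebraic equation for $w$, and substitute into the state and output equations to obtain $(A_o,B_o,C_o,D_o)$. The only difference is that you spell out the uniqueness step explicitly (via freeness of $(x(0),u(0))$), which the paper leaves implicit.
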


\begin{proof} The result follows from the  invertibility of $(I-\tilde{A})$.  Solving for $w$ and substituting into the equations of $\dot{x}$ and $y$ then yields $(A_o,B_o,C_o,D_o)$. 
\end{proof}

Consider, for example, the system (\ref{eq:linearsystem}) with state matrices given by $D=0$ and the following:
\begin{equation}
\label{eq:example_l8}
\begin{array}{lr}
A = \left[\begin{array}{cc}0&0\\0&0\end{array}\right] & \hat{A}=\left[\begin{array}{cccccccc}0&0&1&0&1&0&1&0\\0&0&0&1&0&1&0&1\end{array}\right]\\\\
\bar{A} = \left[\begin{array}{cc}c_1&0\\0&c_2\\0&0\\0&0\\0&0\\0&0\\a_1&0\\0&a_2\end{array}\right]&\tilde{A}=\left[\begin{array}{ccccccccc}0&0&0&0&0&0&0&0\\0&0&0&0&0&0&0&0\\0&e_1&0&0&0&0&0&0\\e_2&0&0&0&0&0&0&0\\0&0&0&0&0&0&0&0\\0&0&0&0&0&0&0&0\\0&0&0&0&0&0&0&0\\0&0&0&0&0&0&0&0\\\end{array}\right]\\\\
B = \left[\begin{array}{cc}0&0\\0&0\end{array}\right] & \bar{B}^T=\left[\begin{array}{cccccccc}0&0&0&0&b_1&0&0&0\\0&0&0&0&0&b_2&0&0\end{array}\right]\\\\
C = \left[\begin{array}{cc}0&0\\0&0\end{array}\right] &\bar{C} = \left[\begin{array}{cccccccc}1&0&0&0&0&0&0&0\\0&1&0&0&0&0&0&0\end{array}\right]\\\\
\end{array}
\end{equation}
This system has the complete computational structure $\cal C$ shown in Figure \ref{fig:linear_example_a}.  Here, because each auxiliary variable is defined as the simple product of a coefficient times another variable, we label the node corresponding to $w_i$ in $\cal C$ with the appropriate coefficient rather than the generic label, $g_i$.  Note that this realization has an intricacy of $l=8$.  

Suppose, however, that we eliminate the last six auxiliary variables, leading to an equivalent realization with intricacy $l=2$.  The state matrices then become
\begin{equation}
\label{eq:example_l2}
\begin{array}{lr}
A = \left[\begin{array}{cc}a_1&0\\0&a_2\end{array}\right] & \hat{A}=\left[\begin{array}{cc}0&e_1\\e_2&0\end{array}\right]\\\\
\bar{A} = \left[\begin{array}{cc}c_1&0\\0&c_2\end{array}\right]&\tilde{A}=\left[\begin{array}{cc}0&0\\0&0\end{array}\right]\\\\
B = \left[\begin{array}{cc}b_1&0\\0&b_2\end{array}\right] & \bar{B}^T=\left[\begin{array}{cc}0&0\\0&0\end{array}\right]\\\\
C = \left[\begin{array}{cc}0&0\\0&0\end{array}\right] &\bar{C} = \left[\begin{array}{cc}1&0\\0&1\end{array}\right]
\end{array}
\end{equation}
with computational structure $\cal C$ as shown in Figure \ref{fig:linear_example_b}.  Similarly, we can find an equivalent realization with $l=0$ given by
\begin{equation}
\label{eq:example_l0}
\begin{array}{ccc}
A_o = \left[\begin{array}{cc}a_1&e_1c_2\\e_2c_1&a_2\end{array}\right] & 
B_o = \left[\begin{array}{cc}b_1&0\\0&b_2\end{array}\right] & 
C_o = \left[\begin{array}{cc}c_1&0\\0&c_2\end{array}\right]\\\\ 
\end{array}
\end{equation}
and all other system matrices equal to zero.  This realization, (\ref{eq:example_l0}), is the minimal intricacy realization of both systems (\ref{eq:example_l8}) and (\ref{eq:example_l2}), and its complete computational structure $\cal C$ is given in Figure \ref{fig:linear_example_c}.  The equivalence between these realizations is easily verified by substitution.

\begin{figure}[htb]
\centering
\subfigure[The complete computational structure $\cal C$ of the linear system given by the state matrices (\ref{eq:example_l8}) with intricacy $l=8$.]{
          \includegraphics*[width=.45\textwidth, viewport = 110 80 680 530 ]{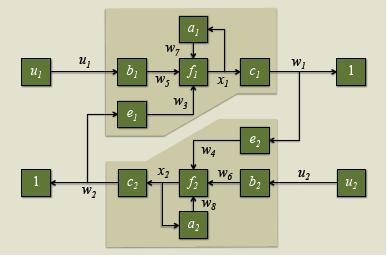}
          \label{fig:linear_example_a}
          }\\
     \subfigure[The complete computational structure $\cal C$ of the equivalent linear system with intricacy $l=2$, specified by the realization (\ref{eq:example_l2}). ]{
          \includegraphics*[width=.45\textwidth, viewport = 110 150 680 460]{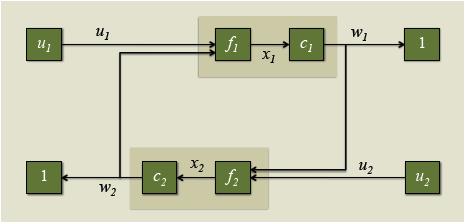} 
          \label{fig:linear_example_b}
          }\\
         \subfigure[The complete computational structure $\cal C$ of the minimal intricacy \hbox{($l=0$)} realization for both systems above, characterized by (\ref{eq:example_l0}).]{
          \includegraphics*[width=.45\textwidth, viewport = 110 150 680 460]{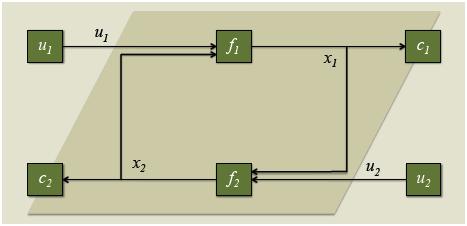}
          \label{fig:linear_example_c}
           }
\caption{The complete computational structure of a linear system characterized by realizations of differing intricacies.  Edges within shaded regions represent {\em hidden} variables, while those outside shaded regions are {\em manifest} variables.}
\label{fig:linear_example}
\end{figure}

Comparing the computational structures for different realizations of the same system, (\ref{eq:example_l8}), (\ref{eq:example_l2}), and (\ref{eq:example_l0}), we note that the intricacy of auxiliary variables plays a critical role in suppressing or revealing system structure.  Moreover, note that auxiliary variables can change the nature of which variables are manifest or hidden.  In Figure \ref{fig:linear_example} shaded regions indicate which variables, represented by edges, are hidden; manifest variables leave shaded regions while hidden variables are contained within them.  Note that the minimal intricacy realization has no internal manifest variables, or, in other words, it has  a single block of hidden variables (Figure \ref{fig:linear_example_c}).   Meanwhile, both $w_1$ and $w_2$ are manifest in the other realizations (Figures \ref{fig:linear_example_a}, \ref{fig:linear_example_b}) since $w_1=y_1$ and $w_2=y_2$, indicated by the ``1" at their respective terminal nodes.  This yields two distinct blocks of hidden variables, in either case, revealing the role intricacy of a realization can play characterizing its structure.

The complete computational structure of a system is thus a graphical representation of the dependency among input, state, auxiliary, and output variables that is in direct, one-to-one correspondence with the system's state realization, generalized to explicitly account for composition intricacy.  All structural and behavioral information is fully represented by this description of a system.  Nevertheless, this representation of the system can also be unwieldy for large systems with intricate structure.  

\section{Partial Structure Representations}

Complex systems are often characterized by intricate computational structure and complicated dynamic behavior.  State descriptions and their corresponding complete computational structures accurately capture both the system's structural and dynamic complexity, nevertheless these descriptions themselves can be too complicated to convey  an efficient understanding of the nature of the system.  Simplified representations are then desirable.    

One way to simplify the representation of a system is to restrict the structural information of the representation while maintaining a complete description of the system's dynamics.  The most extreme example of this type of simplified representation is the transfer function of a single-input single-output linear time invariant (LTI) system.   A transfer function completely specifies the system's input-output dynamics without retaining any information about the computational structure of the system.  For example, consider the $n^{th}$ order LTI single-input single-output system given by $(A,b,c,d)$.  It is well known that although the state description of the system completely specifies the transfer function, $G(s) = c(sI-A)^{-1}b+d$, the transfer function $G(s)$ has an infinite variety of state realizations, and hence computational structures, that all characterize the same input-output behavior.  That is, the structural information in any state realization of the system is completely removed in the transfer function representation of the system, even though the dynamic (or behavioral) information about the system is preserved.

We use this power of a transfer function to obfuscate structural information to develop three distinct partial-structure representations of an LTI system: subsystem structure, signal structure, and the sparsity pattern of a (multiple input, multiple output) system's transfer function matrix.  Later we will show how each of these representations contain different kinds of structural information, and we will precisely characterize the relationships among them.  

\subsection{Subsystem Structure}

One of the most natural ways to reduce the structural information in a system's representation is to partition the nodes of its computational structure into subsystems, then replace these subsystems with their associated transfer function. Each transfer function obfuscates the structure of its associated subsystem, and the remaining (partial) structural information in the system is the interconnection between transfer functions.

Subsystem structure refers to the appropriate decomposition of a system into constituent subsystems and the interconnection structure between these subsystems.  Abstractly, it is the condensation graph of the complete computational structure graph, $\cal C$, taken with respect to a particular partition of $\cal C$ that identifies subsystems in the system.  Such abstractions have been used in various ways to simplify the structural descriptions of complex systems \cite{Siljac,harary}, for example by ``condensing" strongly connected components or other groups of vertices of a graph into single nodes.  Nevertheless, in this work we define a {\em particular} condensation graph as the subsystem structure of the system.  We begin by characterizing the partitions of $\cal C$ that identify subsystems.

\begin{definition}\label{admissiblepartition} 
Given a system $G$ with realization (\ref{eq:linearsystem}) and associated computational structure $\cal C$, we say a partition of $V({\cal C})$ is {\em admissible} if every edge in $E({\cal C})$ between components of the partition represents a variable that is manifest, not hidden.
\end{definition}

For example, considering the system (\ref{eq:example_l0}) with $V({\cal C})=\{u_1, f_1, c_1, c_2, f_2, u_2\}$.  We see that the partition $\{(u_1),(f_1, c_1, c_2, f_2), (u_2)\}$ is admissible since the only edges between components are $u_1(u_1,f_1)$ and $u_2(u_2,f_2)$, representing the manifest variables $u_1$ and $u_2$.  Notice that the shading in Figure \ref{fig:linear_example_c} is consistent with this admissible partition.   Alternatively, the partition $\{(u_1),(f_1, c_1), (c_2, f_2), (u_2)\}$ is not admissible for (\ref{eq:example_l0}), since the edges $x_1(f_1,f_2)$ and $x_2(f_2,f_1)$ extend between components of the partition but represent variables $x_1$ and $x_2$ that are hidden, not manifest.  

Although sometimes any aggregation, or set of fundamental computational mechanisms represented by vertices in $\cal C$, may be considered a valid subsystem, in this work a subsystem has a specific meaning.  In particular, the variables that interconnect subsystems must be manifest, and thus subsystems are identified by the components of admissible partitions of $V({\cal C})$.  We adopt this convention to 1) enable the distinction between real subsystems vs. merely arbitrary aggregations of the components of a system, and 2) ensure that the actual subsystem architecture of a particular system is adequately reflected in the system's computational structure and associated realization, thereby ensuring that such realization is complete.

\begin{definition} Given a system $G$ with realization (\ref{eq:linearsystem}) and associated computational structure $\cal C$, the system's {\em subsystem structure} is a condensation graph $\cal S$ of $\cal C$ with vertex set $V({\cal S})$ and edge set $E({\cal S})$ given by:
\begin{itemize}
\item $V({\cal S})=\{S_1,...S_q\}$ are the elements of an admissible partition of $V({\cal C})$ of maximal cardinality, and
\item $E({\cal S})$ has an edge $(S_i,S_j)$ if $E({\cal C})$ has an edge from some component of $S_i$ to some component of $S_j$.
\end{itemize}
We label the nodes of $V({\cal S})$ with the transfer function of the associated subsystem, which we also denote $S_i$, and the edges of $E({\cal S})$ with the associated variable from $E({\cal C})$.
\end{definition}

Note that, like $\cal C$, the subsystem structure $\cal S$ is a graph with vertices that represent {\em systems} and edges that represent {\em signals}, or system variables.  For example, Figure \ref{fig:SSexample_a} illustrates the subsystem structure for both systems (\ref{eq:example_l8}) and (\ref{eq:example_l2}), shown in Figures \ref{fig:linear_example_a} and \ref{fig:linear_example_b}.  Note that the subsystem structure of these systems' minimally intricate realization, (\ref{eq:example_l0}), is quite different, with a single block rather than two blocks interconnected in feedback, as shown in Figure \ref{fig:SSexample_b}.  This illustrates the necessity of auxiliary variables to adequately describe the complete system structure.

\begin{figure}[htb]
\centering
\subfigure[The subsystem structure $\cal S$ of the linear system given by the state matrices (\ref{eq:example_l8}) with intricacy $l=8$.  Note that the subsystem structure of the equivalent, but less intricate system given by equation (\ref{eq:example_l2}) is exactly the same, indicated by the shaded regions in Figures \ref{fig:linear_example_a} and \ref{fig:linear_example_b}.]{
          \includegraphics*[width=.45\textwidth, viewport = 100 150 690 460]{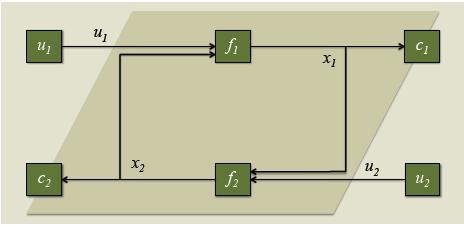} %
          \label{fig:SSexample_a}
          }\\
     \subfigure[The subsystem partial structure $\cal S$ of the dynamically equivalent, minimally intricate linear system, specified by the realization (\ref{eq:example_l0}) corresponding to Figure \ref{fig:linear_example_c}.]{
          \includegraphics*[width=.45\textwidth, viewport = 100 250 690 370]{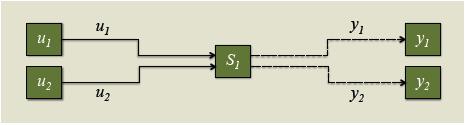}%
          \label{fig:SSexample_b}
          }
          \caption{The subsystem structure of a linear system partitions vertices of the complete computational structure and condenses admissible groups of nodes into single subsystem nodes, shown in brown; nodes shown in green are those that correspond directly to unaggregated vertices of the complete computational structure, which will always be associated with mechanisms that generate manifest variables.  Here, the subsystem structure corresponds to the interconnection of shaded regions in Figure \ref{fig:linear_example}.  Comparing Figures \ref{fig:SSexample_a} and \ref{fig:SSexample_b}, we note that intricacy in a realization may be necessary to characterize meaningful subsystems and yield nontrivial subsystem structure.}
\label{fig:SSexample}
\end{figure}

\begin{lemma}\label{spsunique}
The subsystem structure ${\cal S}$ of a system $G$, with complete computational structure ${\cal C}$, is unique. 
\end{lemma}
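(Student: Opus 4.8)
The plan is to reduce the uniqueness of $\mathcal{S}$ to the uniqueness of the maximal-cardinality admissible partition of $V(\mathcal{C})$. Once a partition $\mathcal{P}$ of $V(\mathcal{C})$ is fixed, the condensation graph it induces is completely determined by $\mathcal{C}$: its vertex set is $\mathcal{P}$ itself, its edge set is read off $E(\mathcal{C})$ by the stated rule, and the node and edge labels (the subsystem transfer functions and the crossing variables) are functions of $\mathcal{C}$ and $\mathcal{P}$ alone. Hence it suffices to show that among all admissible partitions of $V(\mathcal{C})$ there is a unique one of maximal cardinality.

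First I would exhibit the candidate. Let $\mathcal{C}_h$ be the subgraph of $\mathcal{C}$ obtained by keeping only those edges whose labels are hidden variables, viewed as an undirected graph on $V(\mathcal{C})$, and let $\mathcal{P}^\ast$ be the partition of $V(\mathcal{C})$ into the connected components of $\mathcal{C}_h$ (isolated vertices forming singleton blocks). Then $\mathcal{P}^\ast$ is admissible: if some edge of $E(\mathcal{C})$ ran between two distinct blocks of $\mathcal{P}^\ast$ and were labeled by a hidden variable, its endpoints would lie in a common component of $\mathcal{C}_h$, a contradiction; so every edge crossing between blocks of $\mathcal{P}^\ast$ is manifest.

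Next I would show $\mathcal{P}^\ast$ refines every admissible partition. Let $\mathcal{Q}$ be any admissible partition. Every edge of $\mathcal{C}$ labeled by a hidden variable must have both endpoints in a single block of $\mathcal{Q}$, since otherwise that hidden edge would cross between blocks, violating admissibility. Consequently each connected component of $\mathcal{C}_h$, being joined internally only by such hidden edges, lies entirely within one block of $\mathcal{Q}$; that is, every block of $\mathcal{Q}$ is a union of blocks of $\mathcal{P}^\ast$. This gives $|\mathcal{P}^\ast| \ge |\mathcal{Q}|$, with equality precisely when $\mathcal{Q} = \mathcal{P}^\ast$. In particular $\mathcal{P}^\ast$ is an admissible partition of maximal cardinality, and any other admissible partition of the same (maximal) cardinality must coincide with it. Therefore the maximal-cardinality admissible partition is unique, and so is the condensation graph $\mathcal{S}$ it determines.

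The only step requiring genuine care is the bookkeeping around \emph{hidden versus manifest}: one must invoke that each edge of $\mathcal{C}$ carries a fixed label determined by the realization, so that ``the hidden edges of $\mathcal{C}$'' is an unambiguous object; granting that, the argument is the standard fact that the finest partition forced to contain a prescribed set of edges inside its blocks is the connected-component partition of those edges, and every compatible partition is a coarsening of it. A secondary point worth flagging is that a single edge of $\mathcal{S}$ may aggregate several manifest variables emanating from distinct nodes of a block; this does not affect uniqueness, since that collection is again a function of $\mathcal{C}$ and $\mathcal{P}^\ast$ only.
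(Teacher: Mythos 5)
Your proof is correct, but it takes a genuinely different route from the paper. The paper argues by contradiction: it supposes two distinct maximal-cardinality admissible partitions $V({\cal S}^1)\neq V({\cal S}^2)$ exist, forms their common refinement by nonempty pairwise intersections, observes that this refinement is again admissible (any edge crossing its blocks crosses blocks of ${\cal S}^1$ or of ${\cal S}^2$, hence is manifest) and has cardinality strictly greater than $q$, contradicting maximality. You instead exhibit the extremal object directly: the partition ${\cal P}^\ast$ into connected components of the hidden-edge subgraph ${\cal C}_h$, which you show is admissible and refines \emph{every} admissible partition, so it is the unique one of maximal cardinality; uniqueness of ${\cal S}$ then follows since the condensation graph and its labels are determined by ${\cal C}$ and the partition. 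Both arguments are sound (your coarsening and equality-case claims check out, as does your reduction from the graph to the partition), but yours buys more: it is constructive, giving an explicit characterization and an algorithm (compute components of the hidden subgraph), it establishes existence of the maximizer rather than only uniqueness among maximizers, and it reveals the lattice fact that admissible partitions are exactly the coarsenings of ${\cal P}^\ast$ — which also illuminates the paper's later remark that coarser subsystem representations arise by aggregating subsystems. The paper's contradiction argument is shorter but yields no such characterization.
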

\begin{proof}
We prove by contradiction.  Suppose the subsystem structure ${\cal S}$ of $G$ is not unique.  Then there are at least two distinct subsystem structures of $G$, which we will label ${\cal S}^1$ and ${\cal S}^2$.  This implies there are two admissible partitions of $V({\cal C})$, given by $V({\cal S}^1)$ and $V({\cal S}^2)$, such that $V({\cal S}^1)\neq V({\cal S}^2)$ and with equal cardinality, $q$.  Note that by definition, $q$ must be the maximal cardinality of any admisible partition of $V({\cal C})$.  To obtain a contradiction, we will construct another admissible partition, $V({\cal S}^3)$, such that $|V({\cal S}^3)| > q.$

Consider  the following partition of $V({\cal C})$ that is a refinement of both $V({\cal S}^1)$ and $V({\cal S}^2)$:
\[ V({\cal S}^3) = \{S^3|S^3 \neq \emptyset; S^3= S_i \cap {S}_j, S_i \in V({\cal S}^1) , {S}_j \in V({\cal S}^2)\}.\] 
Since $V({\cal S}^1)\neq V({\cal S}^2)$, then $|V({\cal S}^3)| > q$, since the cardinality of the refinement must then be greater than that of $V({\cal S}^1)$ or $V({\cal S}^2)$.  Moreover, note that the partition $V({\cal S}^3)$ is admissible, since every edge of ${\cal C}$ between vertices associated with distinct components of $V({\cal S}^3)$ corresponds with an edge of either ${\cal S}^1$ or ${\cal S}^2$, which are admissible.  Thus, $V({\cal S}^3)$ is an admissible partition of $V(\cal C)$ with cardinality greater than $q$, which contradicts the assumption that ${\cal S}^1$ and ${\cal S}^2$ are both subsystem structures of $G$.
\end{proof}

The subsystem structure of a system reveals the way natural subsystems are interconnected, and it can be represented in other ways besides (but equivalent to) specifying $\cal S$.  For example, one common way to identify this kind of subsystem architecture is to write the system as the linear fractional transformation (LFT) with a block diagonal ``subsystem" component and a static ``interconnection" component (see \cite{ZDG96} for background on the LFT).  For example, the system in Figure \ref{fig:SSexample_a} can be equivalently represented by the feedback interconnection of a static system $N:{\mathbb U}\times{\mathbb W}\rightarrow{\mathbb Y}\times({\mathbb U}\times{\mathbb W})$ and a block-diagonal dynamic system $S:{\mathbb U}\times{\mathbb W}\rightarrow{\mathbb W}$ given by
\begin{equation}
\begin{array}{rl}
N = \left[\begin{array}{cc|cc}0&0&1&0\\0&0&0&1\\\hline1&0&0&0\\0&0&0&1\\0&0&1&0\\0&1&0&0\end{array}\right],&
S = \left[\begin{array}{cc}S_1&0\\0&S_2\end{array}\right],
\end{array}
\end{equation}
where
\begin{equation}
\begin{array}{lr}
S_1:\left[\begin{array}{c}u_1\\w_2\end{array}\right]\rightarrow w_1 & S_2 :\left[\begin{array}{c}w_1\\u_2\end{array}\right]\rightarrow w_2 \\
\end{array}
\end{equation}

\begin{figure}[t]
\centering
          \includegraphics*[width=.45\textwidth, viewport = 200 150 590 460]{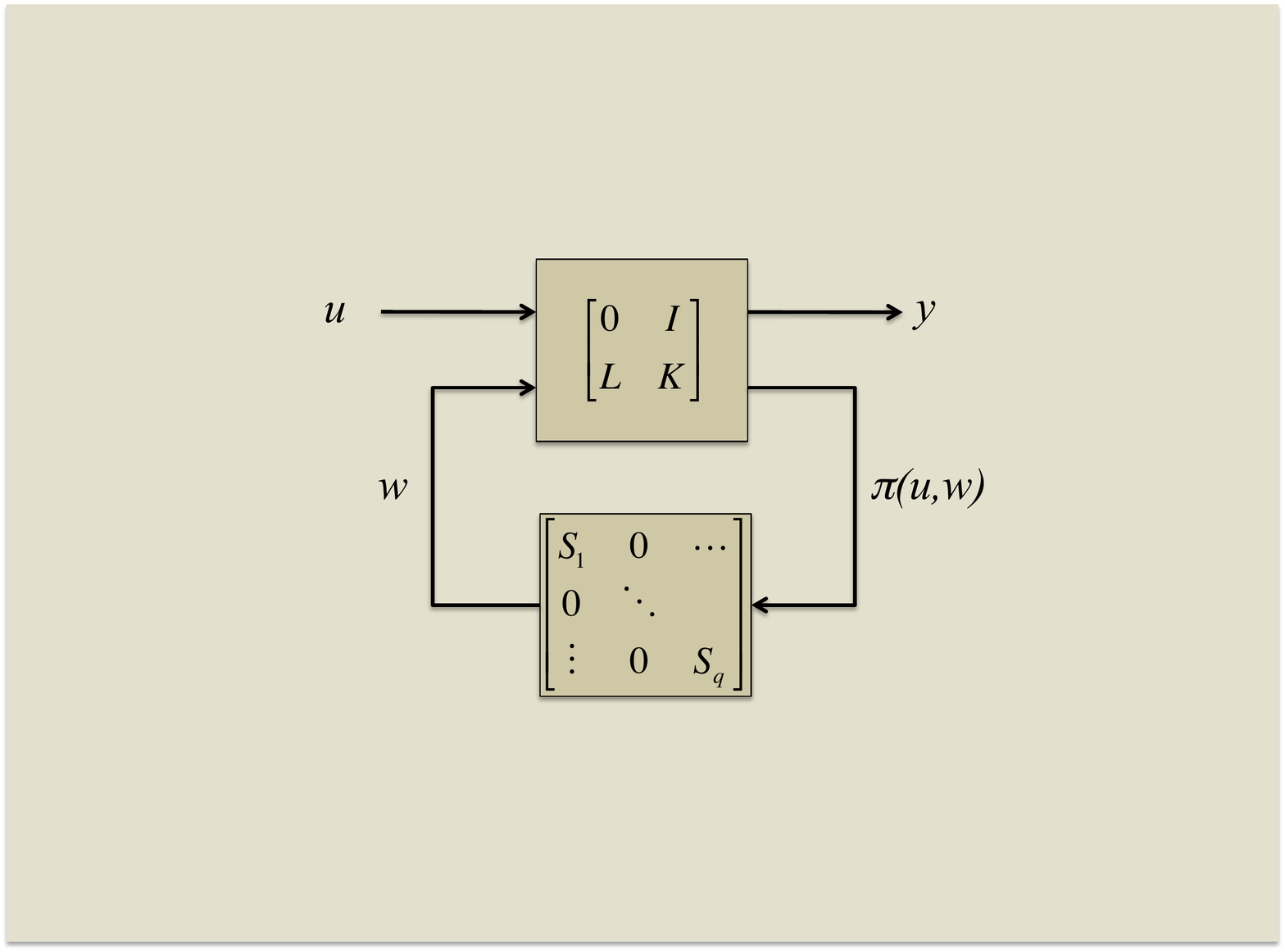} %
\caption{The subsystem structure of a system can always be represented by the lower linear fractional transformation of the static interconnection matrix $N$ with a block diagonal transfer function matrix $S$.  Note that $\pi(u,w)$ represents a permutation of a subset of the variables in the vector inputs, $u$, and manifest auxiliary variables, $w$, possibly with repetition of some variables if necessary.  }
\label{fig:LFTexample}
\end{figure}

In general, the LFT associated with $\cal S$ will have the form
\begin{equation}\label{eq:LFT}
\begin{array}{lr}
N = \left[\begin{array}{cc}0&I\\L&K\end{array}\right]&
S = \left[\begin{array}{ccc}S_1&0&...\\0&\ddots&\\\vdots&0&S_q\end{array}\right]
\end{array}
\end{equation}
where $q$ is the number of distinct subsystems, and $L$ and $K$ are each binary matrices of the appropriate dimension (see Figure \ref{fig:LFTexample}).  Note that if additional output variables are present, besides the manifest variables used to interconnect subsystems, then the structure of $N$ and $S$ above extend naturally.  In any event, however, $N$ is static and $L$ and $K$ are binary matrices.

The subsystem structure of any system is well defined although it may be trivial (a single internal block) if the system does not decompose naturally into an interconnection of subsystems, such as (\ref{eq:example_l0}) in Figure \ref{fig:linear_example_c} and Figure \ref{fig:SSexample_b}.  Note that $\cal S$ always identifies the most refined subsystem structure possible, and for systems with many interconnected subsystems, coarser representations may be obtained by aggregating various subsystems together and constructing the resulting condensation graph.  These coarser representations effectively absorb some interconnection variables and their associated edges into the aggregated components,  suggesting that such representations are the subsystem structure for less intricate realizations of the system, where some of the manifest auxiliary variables are removed, or at least left hidden.  The subsystem structure is thus a natural partial description of system structure when the system can be decomposed into the interconnection of specific subsystems.

\subsection{Signal Structure}

Another very natural way to partially describe the structure of a system is to characterize the direct causal dependence among each of its manifest variables; we will refer to this notion as the {\em signal structure}.  This description of the structure of a system makes no attempt to cluster, or partition, the actual internal system states.  As a result, it offers no information about the internal interconnection of subsystems, and signal structure can therefore be a very different description of system structure than subsystem structure.  

Given a generalized linear system (\ref{eq:linearsystem}) with complete computational structure $\cal C$, we characterize its signal structure by considering its minimal intricacy realization $(A_o,B_o,C_o,D_o)$.  We assume without loss of generality that the outputs $y$ are ordered in such a way that $C_o$ can be partitioned
\[
C_o = \left[\begin{array}{cc}C_{11}&C_{12}\\C_{21}&C_{22}\end{array}\right]
\]  
where $C_{11}\in \mathbb{R}^{p_1\times p_1}$ is invertible, with  $p_1$  equal to the rank of $C_o$; $C_{12}\in \mathbb{R}^{p_1 \times (n-p_1)}$; $C_{21}\in \mathbb{R}^{(p-p_1)\times p_1}$; and $C_{22} \in \mathbb{R}^{(p-p_1)\times(n-p_1)}$.   Note that if the outputs of the minimal intricacy realization do not result in $C_{11}$ being invertible, then it is possible to reorder them so the first $p_1$ outputs correspond to independent rows of $C_o$; the states can then be renumbered so that $C_{11}$ is invertible.  One can show that such reordering of the outputs and states of the minimal intricacy realization only affects the ordering of the states and outputs of the original system; the graphical relationship of the computational structure is preserved.

The direct causal dependence among manifest variables is then revealed as follows.  First, since $C_o$ is rank $p_1$, the rank-nullity theorem guarantees $C_o$ has nullity $n-{p_1}.$  Let \[ N_{n-{p_1}}=\left[ \begin{array}{c} N_{1} \\ N_2 \end{array} \right], \] where $N_1 \in \mathbb{R}^{p_1 \times (n-{p_1})},\  N_2 \in\mathbb{R}^{(n-{p_1})\times(n-{p_1})},$ be a matrix of column vectors that form a basis for the null space of $C_o.$  By the definition of $N$ and the invertibility of $C_{11}$, we can write \[ N = \left[\begin{array}{c} -C_{11}^{-1}C_{12}N_2 \\ N_2 \end{array}\right].\]  Since the columns of $N$ form a basis, we deduce that $N_2$ is invertible; thus the state transformation $z=Tx$ given by \begin{equation}
T = \left[\begin{array}{cc}C_{11} & C_{12} \\0&N_2^{-1}\end{array}\right].
\end{equation} is well defined. 
This transformation yields a system of the form
\begin{equation}
\label{eq:weaksystem}
\begin{array}{rcl}
\left[\begin{array}{c}\dot{z}_1\\\dot{z}_2\end{array}\right]&=&\left[\begin{array}{cc}A_{11}&A_{12}\\A_{21}&A_{22}\end{array}\right]\left[\begin{array}{c}z_1\\z_2\end{array}\right]+\left[\begin{array}{c}B_1\\B_2\end{array}\right]u\\\\
\left[\begin{array}{c}y_1\\y_2\end{array}\right] &=& \left[\begin{array}{cc}I&0\\C_{2}&0\end{array}\right]\left[\begin{array}{c}z_1\\z_2\end{array}\right]+\left[\begin{array}{c}D_1\\D_2\end{array}\right]u
\end{array}
\end{equation}
where $C_2 = C_{21}C_{11}^{-1},\  z_1\in \mathbb R^{p_1},\ z_2 \in \mathbb{R}^{n-p_1}$, $u\in \mathbb R^m$, $y_1\in \mathbb R^{p_1}$, and $y_2\in \mathbb R^{p-p_1}$.  To simplify the exposition we will abuse notation and refer to the above system as $(A,B,C,D)$, since there is little opportunity to confuse these matrices with those of the original system given in (\ref{eq:linearsystem}).  In fact, the system (\ref{eq:weaksystem}) is simply a change of coordinates of the minimal intricacy realization $(A_o,B_o,C_o,D_o)$, possibly with a reordering of the output and state variables.  The direct causal dependence among manifest variables is then revealed by the dynamical structure function of $(A,B,\left[\begin{array}{cc}I&0\end{array}\right],D_1)$.

The dynamical structure function of a class of linear systems was defined in \cite{ourTrans} and discussed in \cite{yeungcdc,yeungCDC10,howescdc08,ourCDC,yuancdc09,fosbe09}.  This representation of a linear system describes the direct causal dependence among a subset of state variables, and it will extend to characterize signal structure for the system in (\ref{eq:weaksystem}).  We repeat and extend the derivation here to demonstrate its applicability to the system (\ref{eq:weaksystem}).  Taking Laplace transforms and assuming zero initial conditions yields the following relationships
\begin{eqnarray}\label{eq:laplace}
\left[\begin{array}{c}sZ_1\\sZ_2\end{array}\right]&=&\left[\begin{array}{cc}A_{11}&A_{12}\\A_{21}&A_{22}\end{array}\right]\left[\begin{array}{c}Z_1\\Z_2\end{array}\right]+\left[\begin{array}{c}B_1\\B_2\end{array}\right]U
\end{eqnarray}
where $Z(s)$ denotes the Laplace transform of $z(t)$, etc.  Solving for $Z_2$ in the second equation and substituting into the first then yields
\begin{equation}\label{eq:WV}
sZ_1 = W(s)Z_1+V(s)U
\end{equation}  
where $W(s) = \left[A_{11}+A_{12}(sI-A_{22})^{-1}A_{21}\right]$ and $V(s)=\left[B_1+A_{12}(sI-A_{22})^{-1}B_2\right]$.  Let $\hat{D}(s)$ be the matrix of the diagonal entries of $W(s)$, yielding
\begin{equation}
\label{eq:QP}
Z_1 = Q(s)Z_1+P(s)U
\end{equation}
where $Q(s)=(sI-\hat{D}(s))^{-1}(W(s)-\hat{D}(s))$ and $P(s)=(sI-\hat{D}(s))^{-1}V(s)$.  From (\ref{eq:weaksystem}) we note that $Z_1=Y_1-D_1U$, which, substituting into (\ref{eq:QP}), then yields:
\begin{equation}
\label{eq:DSF}
\left[\begin{array}{c}Y_1\\Y_2\end{array}\right] = \left[\begin{array}{c}Q(s)\\C_{2}\end{array}\right]Y_1+\left[\begin{array}{c}P(s)+(I-Q(s))D_1\\D_2\end{array}\right]U
\end{equation}
We refer to the matrices \[ \left[\begin{array}{cc}Q(s)^T & C_2^T \end{array}\right]^T \mbox{   }   \left[\begin{array}{cc}(P(s) + (I-Q(s))D_1)^T & D_2^T \end{array}\right]^T \]  as $\bar{Q}$  and $ \bar{P},$ respectively.  The matrices $(Q(s),P(s))$ are called the dynamical structure function of the system (\ref{eq:weaksystem}), and they characterize the dependency graph among manifest variables as indicated in Equation (\ref{eq:DSF}).   We note a few characteristics of $(Q(s),P(s))$ that give them the interpretation of system structure, namely:
\begin{itemize}
\item $Q(s)$ is a square matrix of strictly proper real rational functions of the Laplace variable, $s$, with zeros on the diagonal.  Thus, if each entry of $y_1$ were the node of a graph, $Q_{ij}(s)$ would represent the weight of a directed edge from node $j$ to node $i$; the fact $Q_{ij}(s)$ is proper preserves the meaning of the directed edge as a {\em causal} dependency of $y_i$ on $y_j$.  
\item Similarly, the entries of the matrix $\left[P(s)+(I-Q(s))D_1\right]$ are proper and thus carry the interpretation of causal weights characterizing the dependency of entries of $y_1$ on the $m$ inputs, $u$.  Note that when $D_1=0$, this matrix reduces to $P(s)$, which has {\em strictly} proper entries.
\end{itemize}
This leads naturally to the definition of signal structure.

\begin{definition}  The {\em signal structure} of a system $G$, with realization (\ref{eq:linearsystem}) and equivalent realization (\ref{eq:weaksystem}), and with dynamical structure function $(Q(s),P(s))$ characterized by (\ref{eq:QP}), is a directed graph $\cal W$, with a vertex set $V({\cal W})$ and edge set $E(\cal W)$ given by:
\begin{itemize}
\item $V({\cal W})=\{u_1,...,u_m,y_{11},...,y_{1p_1},y_{21},...,y_{2p_2}\}$, each representing a manifest signal of the system, and
\item $E({\cal W})$ has an edge from $u_i$ to $y_{1j}$, $u_i$ to $y_{2j}$, $y_{1i}$ to $y_{1j}$ or $y_{1i}$ to $y_{2_j}$ if the associated entry in $\left[P(s)+(I-Q(s))D_1\right]$, $D_2$, $Q(s)$, or $C_{21}$ (as given in Equations (\ref{eq:QP}) and (\ref{eq:DSF})) is nonzero, respectively. 
\end{itemize}
We label the nodes of $V({\cal W})$ with the name of the associated variable, and the edges of $E({\cal W})$ with the associated transfer function entry from Equation (\ref{eq:DSF}).
\end{definition}

Signal structure is fundamentally a different {\em type} of graph than either the computational or subsystem structure of a system because, unlike these other graphs, vertices of a system's signal structure represent {\em signals} rather than systems.  Likewise, the edges of $\cal W$ represent {\em systems} instead of signals, as opposed to $\cal C$ or $\cal S$.  We highlight these differences by using circular nodes in $\cal W$, in contrast to the square nodes in $\cal C$ or $\cal S$.  The next example illustrates a system without notable subsystem structure and no apparent structural motif in its complete computational structure; nevertheless, it reveals a simple and elegant ring structure in its signal structure.  

\begin{example} {\em Ring Signal Structure.}
\label{exampleweakring}
Systems with no apparent structure in any other sense can nevertheless possess a very particular signal structure. Consider the minimally intricate linear system, specified by the state-space realization $(A_o,B_o,C_o,D_o)$, where 

\begin{equation*}
\label{eq:weaka}
A_o  = \renewcommand{\tabcolsep}{.5mm}\frac{1}{12}\left[\begin{array}{cccccc} -178 & 262 & -10 &  -141& -19  & 88 \\[.05in] -156 &  252& -12& -156 &-48  & 84 \\[.05in] -158 &  266&  -38& -147&  -5& 128 \\[.05in]  -12& 48& -12 & -72 &-12  &12\\[.05in]  -288&  504& 0& -264 & -180 & 144 \\[.05in] 0 &24  &0  &-24  &-12  &-12  \end{array}\right], 
\end{equation*}
\begin{equation*}
\label{eq:weakb}
B_o = \frac{1}{4}\left[\begin{array}{ccc} 0 & -1 & 21 \\ 0 & 0 & 12 \\ -8 & 1 & 27 \\ 0 & 0 & 0 \\ 0 & 0 & 0 \\ 0 & 0 & 0\end{array} \right],
\end{equation*}
\begin{equation}
\label{eq:weakc}
C_o = \begin{bmatrix} -1 & 4 & -1 & -2 & -1 & 1 \\[.05in] -12 & 21 & 0 & -11 & -5 & 6 \\[.05in] 0 & 2& 0 & -2 & -1 & 0 \end{bmatrix}\hspace{-.1cm},\; D_o = \left[\begin{array}{ccc}0&0&0\\0&0&0\\0&0&0\end{array}\right]\hspace{-.1cm}.
\end{equation}

\begin{figure}[htb]
\centering
  \includegraphics*[width=.45\textwidth, viewport = 100 50 720 550]{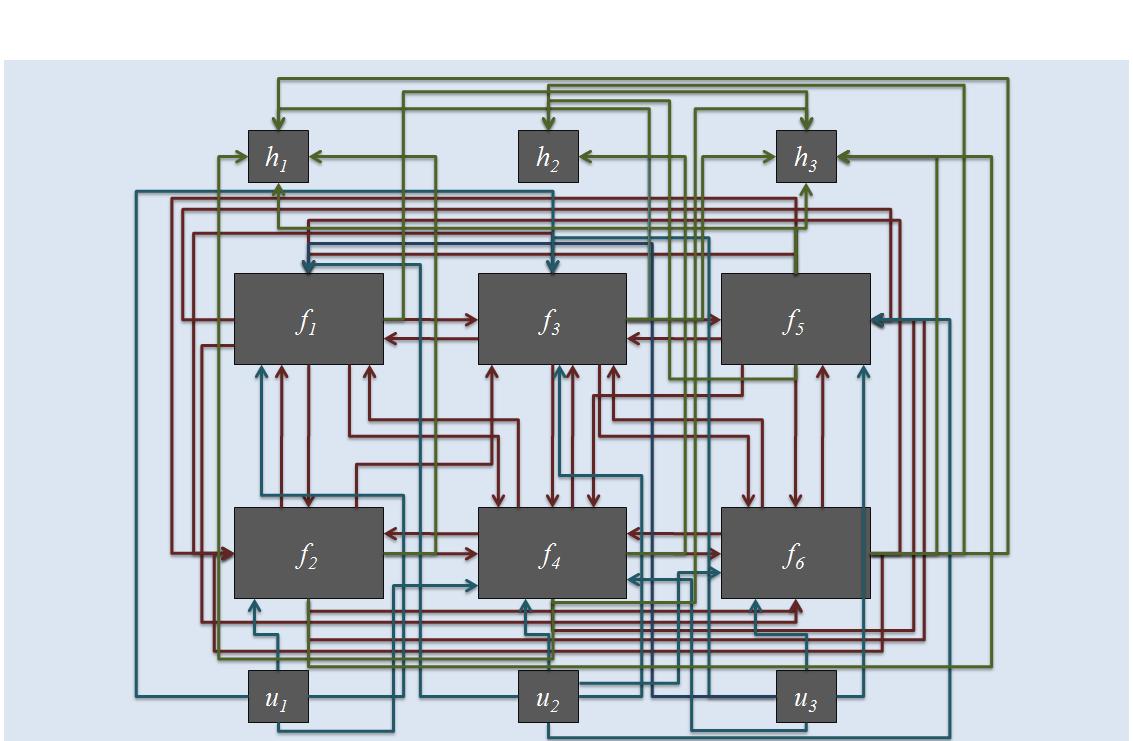}
\caption{The complete computational structure ${\cal C}$,  of the system $(A_o, B_o, C_o,D_o)$ given by (\ref{eq:weakc}).  Here, edge labels, $u$ and $x$, and self loops on each node $f_i$ have been omitted to avoid the resulting visual complexity.  Edges associated with variables $x$, which are not manifest, are entirely contained within the shaded region (which also corresponds to the strong partial condensation shown in Figure \ref{fig:cyclicweakstrong}).} 
 \label{fig:cyclicweakcomp}     
\end{figure}

\begin{figure}[htb]
\centering
     \subfigure[The subsystem structure, ${\cal S}$, of the system shown in Figures \ref{fig:cyclicweakcomp} and \ref{fig:cyclicweak}; the system has no interconnection structure between subsystems because the system is composed of only a single subsystem; it does not meaningfully decompose into smaller subsystems. ]{
          \includegraphics*[width=.45\textwidth, viewport = 120 190 460 420]{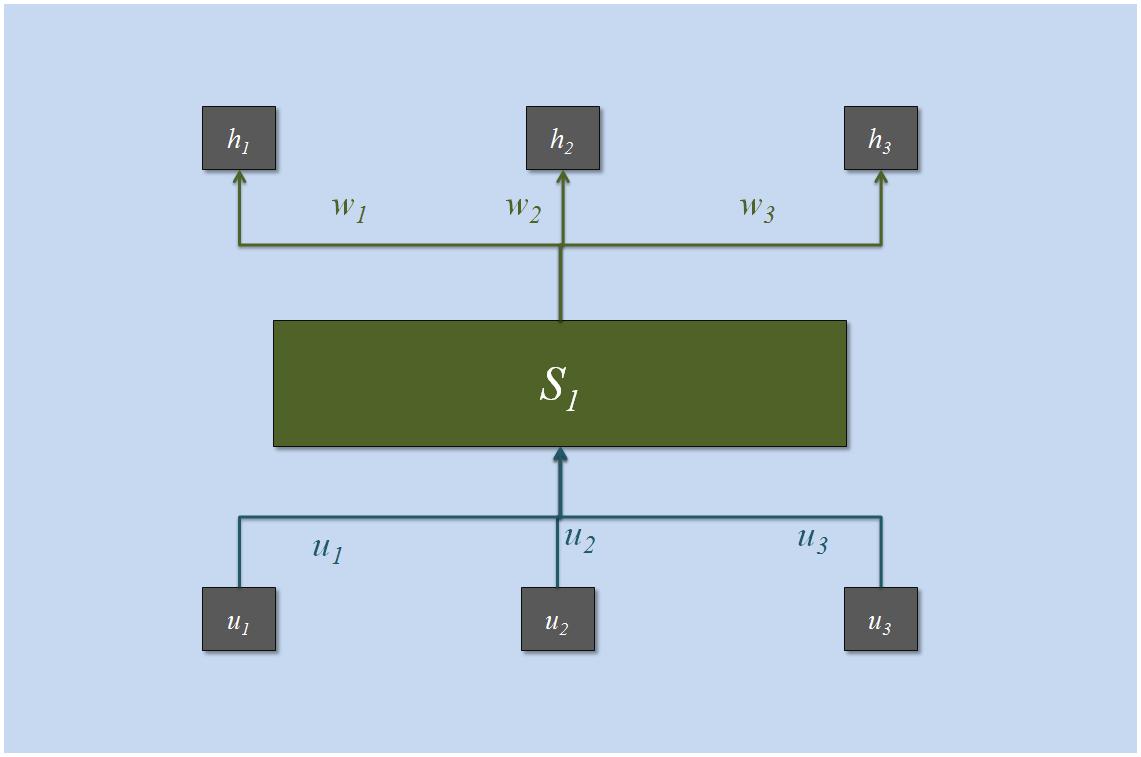}
          \label{fig:cyclicweakstrong}
          }\\
          \vspace{-.5cm}
           \subfigure[The signal structure, ${\cal W}$, of the system shown in Figures \ref{fig:cyclicweakcomp} and \ref{fig:cyclicweakstrong}.  Note that, in contrast with $\cal C$ and $\cal S$, vertices of $\cal W$ represent manifest {\em signals} (characterized by round nodes), while edges represent {\em systems}.]{
          \includegraphics*[width=.45\textwidth, viewport = 160 130 635 600]{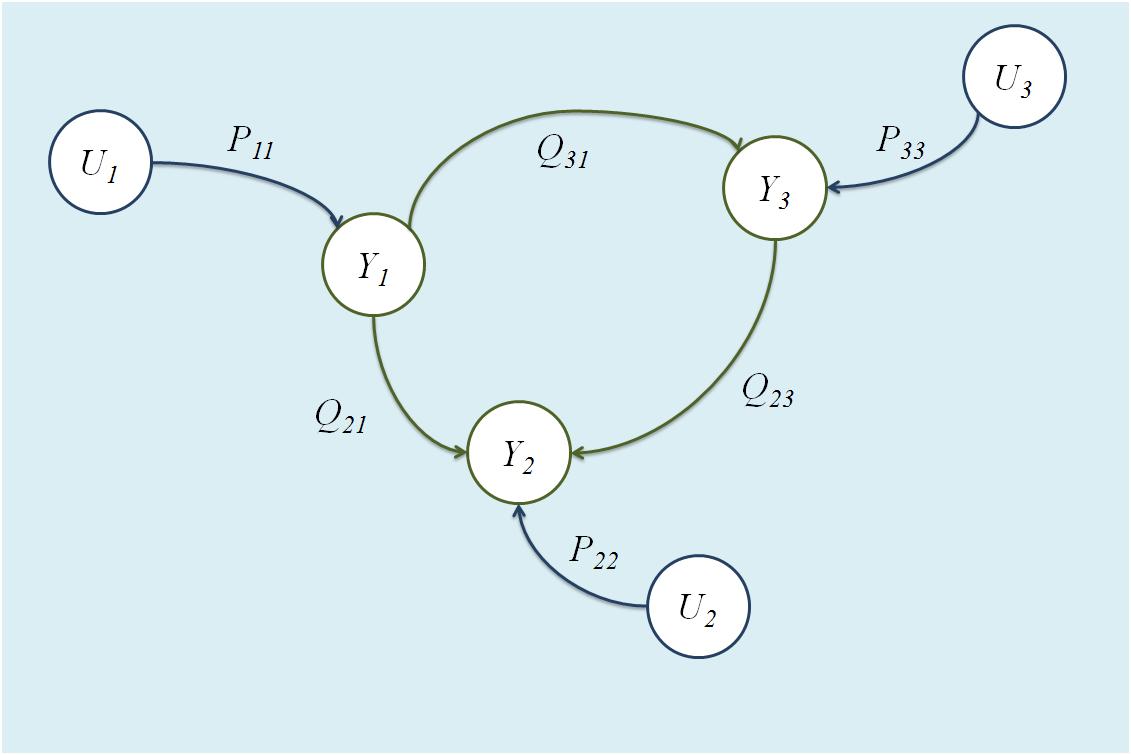}
          \label{fig:cyclicweak}
          }
\caption{Distinct notions of partial structure for the system $(A_o, B_o, C_o, D_o)$ given by Equation (\ref{eq:weakc}).  Observe that while the system is unstructured in the strong sense, it is actually quite structured in the weak sense.  Moreover, although the subsystem structure is visible in the complete computational structure (Figure \ref{fig:cyclicweakcomp}) as a natural condensation (note the shaded region), the signal structure is not readily apparent. }
\label{fig:cyclicweakorig}
\end{figure}

\noindent We compute the signal structure by employing a change of coordinates on the state variables to find an equivalent realization of the form (\ref{eq:weaksystem}).  The transformation
 \begin{equation}
 T = \left[ \begin {array}{cccccc}  -1 & 4 & -1 & -2 & -1 & 1 \\\noalign{\smallskip} -12 & 21 & 0 & -11 & -5 & 6 \\\noalign{\smallskip} 0 & 2 & 0 & -2 & -1 & 0 \\\noalign{\smallskip} 0 & 0 & 0 & 1 & 0 & 0 \\\noalign{\smallskip} 0 & 0  & 0 & 0 & 1 & 0 \\\noalign{\smallskip} 0 & 0 & 0 & 0 & 0 & 1 \end {array} \right]
 \end{equation}
 results in the following realization
 \begin{equation*}
 A=TA_0T^{-1} = \left[ \begin {array}{cccccc} - 2 & 0 & 0 & 0 & 0 &- 3 \\\noalign{\smallskip} 0 &- 3 & 0 &- 1 & 0 & 0 \\\noalign{\smallskip} 0 & 0 &- 4 & 0 & 5 & 0 \\\noalign{\smallskip} 1 & 0 & 0 &- 4 & 0 & 0 \\\noalign{\smallskip} 0 & 2 & 0 & 0 &- 5 & 0 \\\noalign{\smallskip} 0 & 0 & 1 & 0 & 0 &- 1 \end {array} \right]
 \end{equation*}
 \begin{equation}
 \begin{array}{lr} 
B=TB_0 = \left[ \begin {array}{ccc}  2 & 0 & 0 \\\noalign{\smallskip} 0 & 3 & 0 \\\noalign{\smallskip} 0 & 0 & 6 \\\noalign{\smallskip} 0 & 0 & 0 \\\noalign{\smallskip} 0 & 0 & 0 \\\noalign{\smallskip} 0 & 0 & 0 \end {array} \right],&
\begin{array}{c}
C=C_0T^{-1} = \left[\begin{array}{cc} I_3 & 0 \end{array}\right], \\\\\\
D = D_0=[0].
\end{array}
\end{array}
 \end{equation}
The dynamical structure function of the system, $(Q,P)$, then becomes 
\begin{equation*} Q=
 \left[ \begin{array}{ccc} 0&0&\frac{-3}{{s}^{2}+3s+2}\\\noalign{\smallskip} \frac{-1} {{s}^{2}+7s+12}&0&0\\\noalign{\smallskip}0&\frac{10}
  {{s}^{2}+9s+20}&0\end{array} \right],
  \end{equation*}
  \begin{equation}
   P =  \left[ \begin {array}{ccc} \frac{2}{s+2}&0&0\\\noalign{\smallskip} 0&\frac{3}{s+3}&0\\\noalign{\smallskip}0&0&\frac{6}{s+4}\end{array} \right]
  \end{equation}
which yields the signal structure, $\cal W$, as shown in Figure \ref{fig:cyclicweak}.  Notice that although the complete computational structure and subsystem structure do not characterize any meaningful interconnection patterns, the system is nevertheless structured and organized in a very concrete sense.  In particular, the outputs $y_1$, $y_2$, and $y_3$ form a cyclic dependency, and each causally depends on a single input $u_i, i = 1,2,3,$, respectively. 
\end{example}

\subsection{Sparsity Structure of the Transfer Function Matrix}

The weakest notion of structure exhibited by a system is the pattern of zeros portrayed in its transfer function matrix, where ``zero" refers to the value of the particular transfer function element, not a transmission zero of the system.  Like signal structure, this type of structure is particularly meaningful for multiple-input multiple-output systems, and, like signal structure, the corresponding graphical representation reflects the dependance of system output variables on system input variables.  Thus, nodes of the graph will be signals, represented by circular nodes, and the edges of the graph will represent systems, labeled with the corresponding transfer function element; a zero element thus corresponds to the absence of an edge between the associated system input and output.  Formally, we have the following definition.

\begin{definition}  The {\em sparsity structure} of a system $G$ is a directed graph $\cal Z$, with a vertex set $V({\cal Z})$ and edge set $E(\cal Z)$ given by:
\begin{itemize}
\item $V({\cal Z})=\{u_1,...,u_m,y_{1},...,y_{p}\}$, each representing a manifest signal of the system, and
\item $E({\cal Z})$ has an edge from $u_i$ to $y_{j}$ if $G_{ji}$ is nonzero. 
\end{itemize}
We label the nodes of $V({\cal Z})$ with the name of the associated variable, and the edges of $E({\cal Z})$ with the associated element from the transfer function $G(s)$.
\end{definition}

Unlike signal structure, note that the sparsity structure of the transfer function matrix describes the closed-loop dependency of an output variable on a particular input variable, not its {\em direct} dependence.  As a result, the graph is necessarily bipartite, and all edges will begin at an input node and terminate at an output node; no edges will illustrate dependencies between output variables.   For example, the sparsity structure for the system in Example \ref{exampleweakring}, is shown in Figure \ref{fig:zpweakring}, with transfer function $G(s)=$ 

{\tiny 
\begin{equation}
\label{eq:TFweakring}
\left[ \begin {array}{ccc}{\frac {n_1(s)}{d(s) }}&  {\frac {- 90(s+ 4)}{d(s) }}& {\frac {- 18 (s^{3}+ 12  s^{2}+ 47  s+ 60 )}{d(s) }}\\\noalign{\medskip}  {\frac {- 2(s^{3}+ 10  s^{2}+ 29  s+ 20) }{d(s) }}&  {\frac {3 (s^{5}+ 16  s^{4}+ 97  s^{3}+ 274  s^{2}+ 352  s+ 160) }{d(s) }}&  {\frac {18(s+ 5) }{d(s) }}\\\noalign{\medskip} {\frac {- 20(s+ 1 )}{d(s) }}&   {\frac {30(s^{3}+ 7  s^{2}+ 14  s+ 8) }{d(s) }}&   {\frac {n_2(s) }{d(s) }}\end {array} \right] ,
\end{equation}}

\noindent where $d(s) = s^{6}+ 19  s^{5}+ 145  s^{4}+ 565  s^{3}+ 1174  s^{2}+ 1216  s+ 450,\   n_1(s)=2(s^{5}+ 17  s^{4}+ 111  s^{3}+ 343  s^{2}+ 488  s+ 240 )$, and $ n_2(s) = 6(s^{5}+ 15  s^{4}+ 85  s^{3}+ 225  s^{2}+ 274  s+ 120).$  
\begin{figure}[htb]
\centering
  \includegraphics*[width=.45\textwidth]{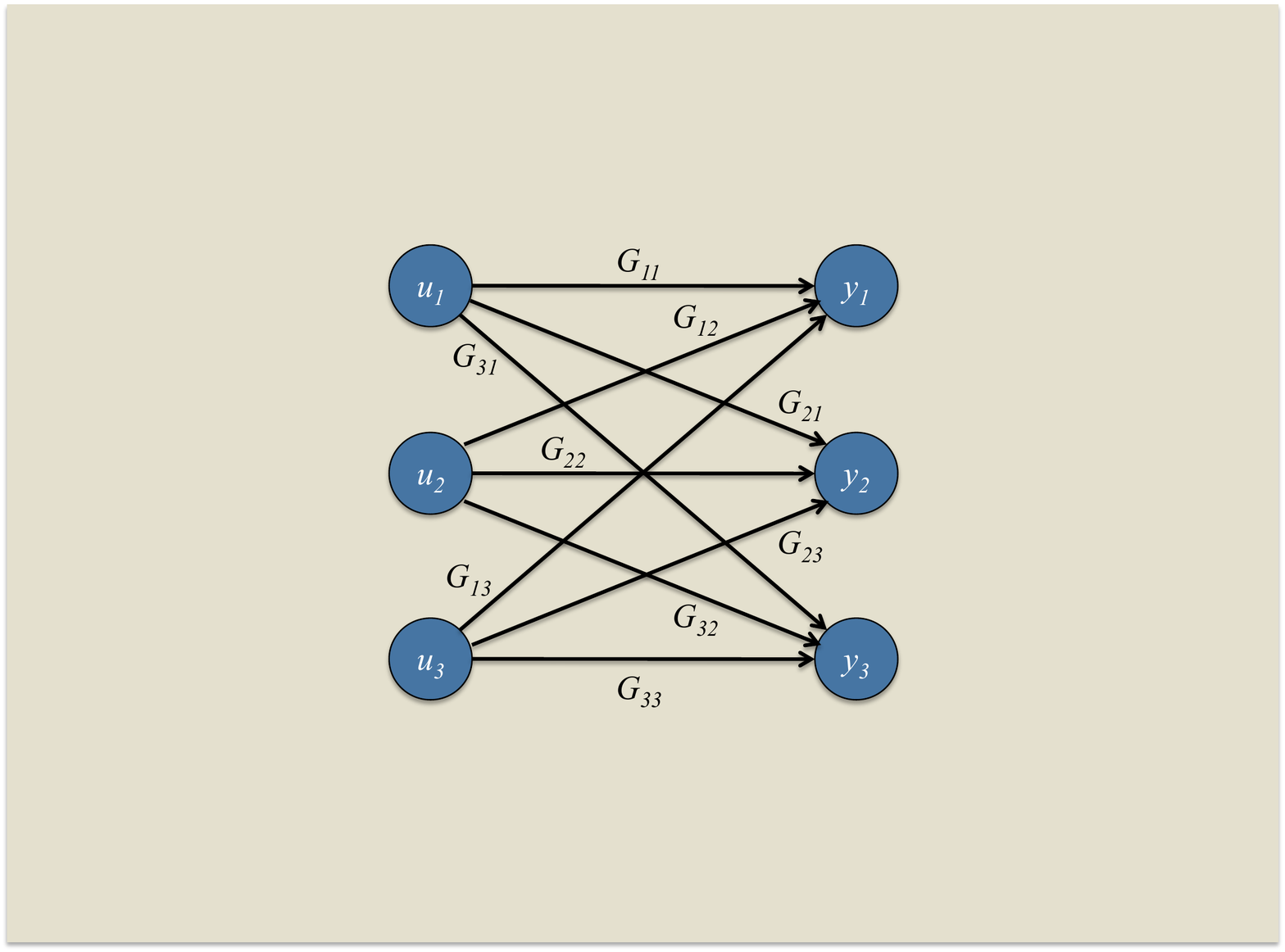}
\caption{The sparsity structure for the system in Example \ref{exampleweakring}.  Note that vertices of sparsity structure are manifest {\em signals}, distinguished in this work by circular nodes similar to those of signal structure.  Nevertheless, this system has a fully connected (and thus unstructured) sparsity structure, while its signal structure (shown in Figure \ref{fig:cyclicweak}) exhibits a definite ring structure. } 
 \label{fig:zpweakring}     
\end{figure}
Although the {\em direct} dependence, given by the signal structure, is cyclic (Figure \ref{fig:cyclicweak}), there is a path from every input to every output that does not cancel.  Thus, the sparsity structure is fully connected, corresponding to the full transfer function matrix in Equation (\ref{eq:TFweakring}).

It is important to understand that the sparsity structure does not necessarily describe the flow of information between inputs and outputs.   The presence of a zero in the $(i,j)^{\rm th}$ location simply indicates that the input-output response of the system results in the $i^{\rm th}$ output having no dependence on the $j^{\rm th}$ input.  Such an effect, however, could be the result of certain internal cancellations and does not suggest, for example, that there is no path in the complete computational structure from the $j^{\rm th}$ input to the $i^{\rm th}$ output.  Thus, for example, a diagonal transfer function matrix does not imply the system is decoupled.     

The next example demonstrates that a fully coupled system may, nevertheless, have a diagonal transfer function, even when the system is minimal in both intricacy and order.  That is, the internal cancellations necessary to generate the diagonal sparsity structure in the transfer function while being fully coupled do not result in a loss of controllability or observability.   Thus, the sparsity structure only describes the input-output dependencies of the system and does not imply anything about the internal information flow or communication structure.   This fact is especially important in the context of decentralized control, where the focus is often on shaping the sparsity of a controller's transfer function given a particular sparsity in the transfer function of the system to be controlled.

\begin{example}{\em Diagonal Transfer Function $\neq$ Decoupled.}
Consider a system, $G$, with the following minimal intricacy realization, $(A_o,B_o,C_o,D_o)$:
\begin{equation}
\begin{array}{rcl}
\left[\begin{array}{c}\dot{x}_1\\\dot{x}_2\end{array}\right]&=&\left[\begin{array}{cc}-5&1\\2&-4\end{array}\right]\left[\begin{array}{c}x_1\\x_2\end{array}\right]+
\left[\begin{array}{cc}2&1\\4&-1\end{array}\right]\left[\begin{array}{c}u_1\\u_2\end{array}\right]\\\\
\left[\begin{array}{c}y_1\\y_2\end{array}\right]&=&\left[\begin{array}{cc}1&1\\-4&2\end{array}\right]\left[\begin{array}{c}x_1\\x_2\end{array}\right]
\end{array}
\end{equation}
It is easy to see from $(A_o,B_o,C_o,D_o)$ that this system has a fully connected complete computational structure.  Moreover, one can easily check that the realization is controllable and observable, and thus of minimal order.  Nevertheless, its transfer function is diagonal, given by
\begin{equation}
G(s) = \left[\begin{array}{cc}\frac{6}{s+3}&0\\0&\frac{-6}{s+6}\end{array}\right].
\end{equation}
\end{example}
\section{Relationships Among Representations of Structure}
In this section, we explore the relationships between the four representations of structure defined above.  What we will find is that some representations of structure are more informative than others.  Next, we will discuss a specific class of systems in which the four representations of structure can be ranked by information content (with one representation encapsulating all the information contained in another representation of structure).  Outside this class of systems, however, we will demonstrate that signal structure and subsystem structure have fundamental differences in addition to those arising from the graphical conventions of circular versus square nodes, etc.  Signal and subsystem structure provide alternative ways of discussing a system's structure without requiring the full detail of a state-space realization or the abstraction imposed by a transfer function.  Understanding the relationships between these representations then enables the study of new kinds of research problems that deal with the realization, reconstruction, and approximation of system structure (where structure can refer to the four representations defined so far or other representations of structure not mentioned in this work). The final section gives a discussion of such future research. 

Different representations of structure contain different kinds of structural information about the system.  For example, the complete computational structure details the structural dependencies among fundamental units of computation.  Using complete computational structure to model system structure requires knowledge of the parameters associated with each unit of computation. Partial representations of structure such as signal and subsystem structure do not require knowledge of such details in their description. Specifically, subsystem structure essentially aggregates units of computation to form subsystems and models the closed-loop transfer function of each subsystem.  Signal structure models the SISO transfer functions describing direct causal dependencies between outputs and inputs of some of the fundamental units of computation that happen to be manifest.  Sparsity structure models the closed-loop dependencies of system outputs on inputs. Thus, complete computational structure appears to be the most demanding or information-rich description of system structure.  This intuition is made precise with the following result:

\begin{theorem}\label{CCSsaysall}
Suppose a complete computational structure has minimal intricacy realization $(A_o,B_o,C_o,D_o)$ with \[C_0 = \begin{bmatrix} C_{11}& C_{12} \\ C_{21} & C_{22} \end{bmatrix}\] and $C_{11}$ invertible. 
Then the complete computational structure specifies a unique subsystem, signal, and sparsity structure.
\end{theorem}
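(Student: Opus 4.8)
The plan is to prove the three uniqueness claims separately, exploiting the fact that a system's complete computational structure $\cal C$ is in one-to-one correspondence with its generalized state realization (\ref{eq:linearsystem}) together with its composition structure. The subsystem and sparsity cases are short. For subsystem structure, Lemma \ref{spsunique} already states that $\cal S$ is uniquely determined by $\cal C$, so nothing remains to do. For sparsity structure, $\cal C$ recovers all of the matrices $A,\hat A,\bar A,\tilde A,B,\bar B,C,\bar C,D$; by Lemma \ref{minintrlemma} the invertibility of $I-\tilde A$ yields the \emph{unique} minimal intricacy realization $(A_o,B_o,C_o,D_o)$, hence the transfer function $G(s)=C_o(sI-A_o)^{-1}B_o+D_o$ is uniquely fixed, and $\cal Z$ is by definition a function of the zero pattern of $G(s)$.

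The substantive case is signal structure. Again $\cal C$ fixes $(A_o,B_o,C_o,D_o)$, and with the output/state ordering chosen so that $C_{11}$ is invertible (the hypothesis), the construction of $\cal W$ runs as follows: choose a basis for the null space of $C_o$, i.e.\ a full-column-rank $N_2\in\mathbb R^{(n-p_1)\times(n-p_1)}$; form the coordinate change $T$; read off $C_2=C_{21}C_{11}^{-1}$, $D_1$, and $D_2$ directly from $C_o$ and $D_o$ (no choices here); and then compute $W(s)$, $V(s)$, the diagonal matrix $\hat D(s)$, and finally $(Q(s),P(s))$ via (\ref{eq:WV})--(\ref{eq:QP}). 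Since the edge set and labels of $\cal W$ are prescribed by $(Q,P)$ together with $C_{21}$ and $D_2$, it suffices to show that $(Q,P)$ is independent of the one genuine choice in this procedure, namely the null-space basis $N_2$.

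To see this, replace $N_2$ by $N_2M$ with $M\in\mathbb R^{(n-p_1)\times(n-p_1)}$ invertible. This multiplies $T$ on the left by $\mathrm{diag}(I,M^{-1})$, so the new coordinates satisfy $z_1'=z_1$ and $z_2'=M^{-1}z_2$; consequently $A_{11}$ and $B_1$ are unchanged, $(A_{22},A_{21},A_{12},B_2)\mapsto(M^{-1}A_{22}M,\,M^{-1}A_{21},\,A_{12}M,\,M^{-1}B_2)$, and $C$ retains the form shown in (\ref{eq:weaksystem}). Hence $A_{12}(sI-A_{22})^{-1}A_{21}$ and $A_{12}(sI-A_{22})^{-1}B_2$ --- being transfer functions of the $z_2$-subsystem --- are invariant under this state similarity, so $W(s)$ and $V(s)$ are invariant, and therefore so are $\hat D(s)$, $Q(s)=(sI-\hat D(s))^{-1}(W(s)-\hat D(s))$, and $P(s)=(sI-\hat D(s))^{-1}V(s)$; here $sI-\hat D(s)$ is an invertible rational matrix because each diagonal entry $s-\hat D_{ii}(s)$ behaves like $s$ at infinity and is thus not the zero function. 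This establishes uniqueness of $\cal W$, completing the proof.

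The main obstacle is precisely this invariance step: recognizing that the apparent nonuniqueness introduced by the coordinate transformation collapses once it is reduced to the coordinate-invariance of the $z_2$-subsystem's transfer functions. A secondary point is the ambiguity in how the outputs and states may be reordered to make $C_{11}$ invertible; as noted in the discussion preceding the definition of signal structure, any such reordering merely relabels the manifest variables and preserves the graphical relationships, so the resulting $\cal W$ is unique up to that relabeling, which is all that the statement claims.
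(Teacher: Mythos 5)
Your proof is correct and follows essentially the same route as the paper's: Lemma \ref{spsunique} for subsystem structure, the one-to-one correspondence of sparsity structure with the transfer function of the (unique, by Lemma \ref{minintrlemma}) minimal intricacy realization, and the dynamical structure function construction via equations (\ref{eq:WV})--(\ref{eq:DSF}) for signal structure. The only difference is that you explicitly verify that $(Q,P)$ is invariant under the one genuine choice in the construction, the null-space basis $N_2$ (by reducing it to a similarity transformation of the $z_2$-subsystem), a step the paper's proof leaves implicit when it asserts that invertibility of $C_{11}$ makes the dynamical structure function uniquely determined; this is an elaboration of the same argument rather than a different approach.
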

\begin{proof}
Let ${\cal C}$ be a computational structure with minimal intricacy realization $(A_o,B_o,C_o,D_o)$ with  \[ C_0 = \begin{bmatrix} C_{11}& C_{12} \\ C_{21} & C_{22} \end{bmatrix},\] $C_{11}$ invertible.  By Lemma \ref{spsunique}, the subsystem structure ${\cal S}$ is unique.   Since $C_{11}$ is invertible, we see that equations (\ref{eq:WV}) and (\ref{eq:DSF}) imply that the minimal intricacy realization uniquely specifies the dynamical structure function of the system.  By definition, the signal structure is unique.  Finally, write the generalized state-space realization of ${\cal C}$ as 
\[(\bf{A},\bf{B},\bf{C},\bf{D}) =  \left(\left[\begin{array}{cc} A & \hat{A} \\ \bar{A} & \tilde{A}\end{array}\right],	\left[\begin{array}{c}B \\ \bar{B} \end{array}\right] , \left[\begin{array}{cc} C & \bar{C} \end{array}\right],D \right).\]  The uniqueness of the sparsity  structure follows from its one-to-one correspondence with the transfer function $G(s) = C_o(sI-A_o)^{-1}B_o+D_o$ which can also be expressed as 
\[ (C+ \bar{C}(I-\tilde{A})^{-1}\bar{A})(sI-A - \hat{A}(I-\tilde{A})^{-1}\bar{A})^{-1}(B+\bar{B}(I-\tilde{A})^{-1}\bar{A}).\] 
\end{proof}

It is well known that a transfer function $G(s)$ can be realized using an infinite number of state-space realizations. Without additional assumptions, e.g. full state feedback, it is impossible to uniquely associate a single state-space realization with a given transfer function.  On the other hand, a state space realization specifies a unique transfer function.  In this sense, a transfer function contains less information than the state space realization.   

Similarly,  subsystem, signal, and sparsity structure can be realized using multiple complete computational structures.    Without additional assumptions, it is impossible to associate a unique complete computational structure with a given subsystem, signal, or sparsity structure.  Theorem \ref{CCSsaysall} shows that a complete computational structure specifies a unique subsystem, signal, and sparsity structure. In this sense, a complete computational structure is a more informative description of system structure than subsystem, signal and sparsity structure.   The next result has a similar flavor and follows directly from the one-to-one correspondence of a system's transfer function with its sparsity structure.
\begin{theorem}\label{SQPtoG}
Every subsystem structure or signal structure specifies a unique sparsity structure.
\end{theorem}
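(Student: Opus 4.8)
The plan is to reduce the statement to the one fact already used in the proof of Theorem~\ref{CCSsaysall}: a transfer function matrix is in one-to-one correspondence with its sparsity structure. So it suffices to show that a subsystem structure $\cal S$, and separately a signal structure $\cal W$, each determine the transfer function $G(s)$ uniquely; the uniqueness of $\cal Z$ then follows immediately. Note that "unique" here means the maps $\cal S \mapsto \cal Z$ and $\cal W \mapsto \cal Z$ are well defined (single-valued), not that distinct $\cal S$ or $\cal W$ yield distinct $\cal Z$.

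For the subsystem-structure case, I would start from the linear fractional representation of $\cal S$ described around Equation~(\ref{eq:LFT}) and in Figure~\ref{fig:LFTexample}. The labeled vertices of $\cal S$ supply the block-diagonal dynamic part $S = \mathrm{diag}(S_1,\dots,S_q)$, and the labeled edges of $\cal S$ (each carrying a specified manifest variable) fix the static binary interconnection matrix $N = \left[\begin{array}{cc}0 & I\\ L & K\end{array}\right]$, with the analogous extension if there are extra output variables. Then $G(s) = \mathcal{F}_\ell(N,S)$, the lower LFT of $N$ with $S$ (see \cite{ZDG96}), which in the form above is $G(s) = (I - SK)^{-1}SL$. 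Since $\cal S$ is, by construction, the condensation of the computational structure of an actual index-zero realization (\ref{eq:linearsystem}), the feedback loop is well posed, i.e.\ $(I-SK)$ is invertible as a rational matrix, so this expression is well defined and depends on $(N,S)$ alone. Hence $\cal S$ determines $G(s)$, and therefore $\cal Z$, uniquely.

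For the signal-structure case, recall that the edge labels of $\cal W$ are precisely the nonzero entries of $\bar Q = [Q(s)^{T}\ C_{2}^{T}]^{T}$ and $\bar P = [(P(s)+(I-Q(s))D_1)^{T}\ D_2^{T}]^{T}$ appearing in Equation~(\ref{eq:DSF}), so knowing $\cal W$ is the same as knowing $\bar Q$ and $\bar P$. Writing $\Pi = [I_{p_1}\ 0]$ so that $Y_1 = \Pi Y$, Equation~(\ref{eq:DSF}) becomes $Y = \bar Q\Pi Y + \bar P U$, that is, $(I - \bar Q\Pi)Y = \bar P U$. Because $\bar Q\Pi$ has its last $p-p_1$ columns zero, $I - \bar Q\Pi$ is block lower triangular with diagonal blocks $I - Q(s)$ and $I$; since $Q$ is strictly proper, $I-Q$ tends to the identity as $s\to\infty$ and is thus invertible over the rational functions. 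Hence $G(s) = (I - \bar Q\Pi)^{-1}\bar P$ is well defined and determined by $\cal W$ alone, so $\cal W$ determines $\cal Z$ uniquely as well.

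Combining the two cases and invoking the one-to-one correspondence between $G(s)$ and the sparsity structure completes the argument. I expect the only real subtlety to be the two well-posedness checks --- invertibility of $(I-SK)$ in the subsystem case, which I would ground in the zero differentiation index of the underlying realization, and invertibility of $(I-\bar Q\Pi)$ in the signal case, which follows from strict properness of $Q$. Everything else is bookkeeping that unwinds directly from the definitions of $\cal S$, $\cal W$, and $\cal Z$.
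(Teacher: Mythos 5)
Your proposal is correct and takes essentially the same approach as the paper: both arguments convert the given representation into the transfer function --- $G(s)=(I-SK)^{-1}SL$ from the LFT $(N,S)$ of ${\cal S}$, and $G(s)=\bigl(I-\left[\begin{smallmatrix}Q & 0\\ C_2 & 0\end{smallmatrix}\right]\bigr)^{-1}\bar{P}$ (your $(I-\bar{Q}\Pi)^{-1}\bar{P}$) from the dynamical structure function of ${\cal W}$ --- and then invoke the one-to-one correspondence between $G(s)$ and the sparsity structure ${\cal Z}$. Your explicit well-posedness checks (invertibility of $I-SK$ and of $I-\bar{Q}\Pi$) are a small refinement of details the paper leaves implicit, not a different route.
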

\begin{proof}
Consider the LFT representation ${\cal F}(N,S)$ of a subsystem structure ${\cal S}$; write \[ N = \left[\begin{array}{c|c} 0 & I \\\hline L & K \end{array}\right]\] as in equation (\ref{eq:LFT}).  The linear fractional transform gives the input-output map, i.e. the transfer function.  Thus, $G(s) = (I-SK)^{-1}SL.$ 

Similarly, using the dynamical structure representation of the signal structure ${\cal W}$ given in equation (\ref{eq:DSF}), we can solve for the transfer function 
\[ G(s) = \left(I-\left[\begin{array}{cc}Q(s)& 0 \\C_{2} & 0\end{array}\right]\right)^{-1}\left[\begin{array}{c}P(s)+(I-Q(s))D_1\\D_2\end{array}\right] \]    The result follows from the definition of sparsity structure.
\end{proof}

The relationship between subsystem structure and signal structure is not so straightforward.  Nevertheless, subsystem structure does specify a unique signal structure for a class of systems, namely systems with subsystem structure composed of single output (SO) subsystems and where every manifest variable is involved in subsystem interconnection.  For this class of systems, subsystem structure is a richer description of system structure than signal structure. 

{\subsection{Single Output Subsystem Structure and Signal Structure}
\noindent \begin{theorem}\label{SpecialMISO}
Let ${\cal S}$ be a  SO subsystem structure with LFT representation ${\cal F}(N,S).$  Suppose \[N = \left[\begin{array}{c|c}0 & I \\\hline L & K \end{array} \right],\]  where $\begin{bmatrix} L & K \end{bmatrix}$ has full column rank. Then ${\cal S}$ uniquely specifies the signal structure of the system.
\end{theorem}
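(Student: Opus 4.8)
The plan is to show that from the LFT data $(N,S)$ of a single-output (SO) subsystem structure, with $N = \left[\begin{smallmatrix} 0 & I \\ L & K \end{smallmatrix}\right]$ and $\begin{bmatrix} L & K \end{bmatrix}$ of full column rank, one can reconstruct the pair $(Q(s),P(s))$ (together with the $C_2$, $D_1$, $D_2$ blocks) that defines the signal structure $\mathcal W$. First I would unpack what ``SO subsystem'' means here: each diagonal block $S_i$ of $S$ is a SISO-in-output transfer function — it produces a single manifest output $y_i$ from some collection of inputs and other manifest variables, selected by the binary rows of $\begin{bmatrix} L & K \end{bmatrix}$. The hypothesis that every manifest variable is involved in an interconnection (built into the form of $N$) means the manifest variable set coincides with the set of subsystem outputs, so there is no ``dangling'' $y_2$-block and the decomposition in equation (\ref{eq:DSF}) collapses to the $Q$, $P$ part with $C_2,D_2$ absent (or trivially determined).

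Next I would write out the closed-loop relation. From $G(s) = (I-SK)^{-1}SL$ (Theorem \ref{SQPtoG}) and the fact that each $S_i$ is a scalar transfer function, the $i^{th}$ manifest output satisfies $y_i = S_i\bigl( (Ky)_i + (Lu)_i \bigr)$, i.e. $y_i = S_i \sum_{j} K_{ij} y_j + S_i \sum_k L_{ik} u_k$. Since $K$ is binary with zero diagonal (a subsystem does not feed its own output back through the interconnection — this should be argued from admissibility/the structure of $\mathcal C$), this is exactly of the form $y = \tilde Q(s) y + \tilde P(s) u$ with $\tilde Q_{ij} = S_i K_{ij}$ and $\tilde P_{ik} = S_i L_{ik}$, and $\tilde Q$ has zero diagonal. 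Comparing with the normalization in (\ref{eq:QP}) — where $Q$ is obtained by pulling the diagonal of $W$ to the left-hand side — I would check that the map $(\tilde Q,\tilde P)\mapsto(Q,P)$ is a bijection: given any $(\tilde Q,\tilde P)$ with hollow $\tilde Q$, the associated $(Q,P)$ of (\ref{eq:QP}) is uniquely determined, and conversely. Hence $(Q,P)$, and therefore the edge set and edge labels of $\mathcal W$, are uniquely pinned down by $(N,S)$.

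The one genuine subtlety — and the step I expect to be the main obstacle — is the role of the full-column-rank hypothesis on $\begin{bmatrix} L & K \end{bmatrix}$. The concern is that different $(N,S)$ pairs could give the same closed-loop $y = \tilde Q y + \tilde P u$, hence the same signal structure, which is fine, but we must be sure the signal structure is *well-defined* (a function of $\mathcal S$, not of an arbitrary LFT presentation of it) and that no *spurious* edges are introduced: an entry $S_i K_{ij}$ could in principle vanish even when $S_i\neq 0$ and $K_{ij}=1$ only if $S_i\equiv 0$, which full column rank rules out (a zero block $S_i$ would make the corresponding columns of $\begin{bmatrix} L & K\end{bmatrix}$ contribute nothing, and more importantly would make $y_i$ structurally disconnected, contradicting that $S_i$ is a bona fide subsystem). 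So full column rank guarantees every $S_i\not\equiv 0$ and that the binary pattern of $\begin{bmatrix} L & K\end{bmatrix}$ transfers faithfully to the support of $(\tilde Q,\tilde P)$ and thence to $E(\mathcal W)$. I would close by remarking that uniqueness of $\mathcal S$ itself is Lemma \ref{spsunique}, so the chain $\mathcal S \mapsto (N,S) \text{ (up to presentation)} \mapsto (\tilde Q,\tilde P) \mapsto (Q,P) \mapsto \mathcal W$ is single-valued, which is the claim.
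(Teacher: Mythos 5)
Your proposal is correct and takes essentially the same route as the paper's proof: write out the LFT relations $Y=S\pi$, $\pi=LU+KY$, combine them into $Y=S\begin{bmatrix}K & L\end{bmatrix}\begin{bmatrix}Y\\ U\end{bmatrix}$, and use the single-output property to read the entries of $S$ as the direct causal dependencies among manifest variables (so $Q_{int}=0$ and $(Q,P)$ is obtained directly), with the rank condition on $\begin{bmatrix}L & K\end{bmatrix}$ ensuring the signal structure covers all manifest variables. The only minor discrepancy is the role you assign to full column rank (ruling out $S_i\equiv 0$ and spurious edges); the paper — and your own first paragraph — uses it only to guarantee that every column of the binary matrix $\begin{bmatrix}L & K\end{bmatrix}$ is nonzero, i.e.\ every manifest variable is used at least once in interconnection, which is what the argument actually needs.
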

\begin{proof}
The dynamics of $N$ and $S$ satisfy \begin{eqnarray}\label{eq:NSdynamics} Y  &=& \left[0\right] U + \left[I\right] Y \\ \pi &=& L U + K Y \\ Y &=& S\pi \end{eqnarray}  Combining the second and third equation, we get \[Y = S\pi = S\begin{bmatrix}K & L \end{bmatrix}\begin{bmatrix} Y \\ U \end{bmatrix}.\]  Since ${\cal S}$ is a SO subsystem structure, the entries of $S$ describe direct causal dependencies among manifest variables involved in interconnection. Since $\begin{bmatrix} L & K \end{bmatrix}$ has full column rank and is a binary matrix, this means that each manifest variable is used at least once in interconnection.   Thus, $S$ describes the direct causal dependencies between all manifest variables of the system and specifies the signal structure of the system.
\end{proof}
Notice that for the class of systems described above, the four representations of structure can be ordered in terms of information content.  Theorem \ref{CCSsaysall} shows that the complete computational structure uniquely specifies all the other representations of structure and thus is the most informative of the four.  By Theorem \ref{SpecialMISO} and Theorem \ref{SQPtoG} respectively, subsystem structure uniquely specifies the signal structure and sparsity structure of the system and thus is the second most informative.  Similarly, signal structure is the third most informative and sparsity structure is the least informative of the four representations of structure.  

We note that the converse of Theorem \label{theorem:SpecialMISO} is also true, namely if the subsystem structure of a system specifies a unique signal structure then the subsystem structure is a SO subsystem structure where every manifest variable is an interconnection variable.  The proof is simple and follows from the result of the next subsection.  We also provide several examples that show how a multiple output subsystem structure can be consistent with multiple signal structures - these all will serve to illustrate the general relationship between subsystem and signal structure outside of the class of systems mentioned above.}

\subsection{The Relationship Between Subsystem and Signal Structure}
Subsystem structure and signal structure are fundamentally different descriptions of system structure.  In general, subsystem structure does not encapsulate the information contained in signal structure.  Signal structure describes direct causal dependencies between manifest variables of the system.  Subsystem structure describes closed loop dependencies between manifest variables involved in the interconnection of subsystems.  Both representations reveal different perspectives of a system's structure. The next result makes this relationship between subsystem and signal structure precise.\\ 
\begin{theorem}
Given a system $G$, let ${\cal F}(N,S)$ be the LFT representation of a subsystem structure ${\cal S}.$ In addition, let the signal structure of the system $G$ be denoted as in equation (\ref{eq:DSF}).  Let $Y(S_i)$ denote the outputs associated with subsystem $S_i.$ Define \[\left[ Q_{int}(s)\right]_{ij} \equiv \begin{cases} \bar{Q}_{ij}(s) & y_i, y_j \in Y(S_k), S_k \text{ a subsystem in ${\cal S}$}  \\ 0 & \text{ otherwise,} \end{cases} \] and $Q_{ext} \equiv \bar{Q}(s) - Q_{int}(s).$ Then the signal structure and subsystem structure are related in the following way:
\begin{equation}\label{eq:SQPrelation} S \left[ \begin{array}{c|c} L & K \end{array}\right] = (I-Q_{int})^{-1} \left[ \begin{array}{c|c}  \bar{P}& Q_{ext} \end{array} \right] \end{equation}
\end{theorem}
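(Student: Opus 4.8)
The plan is to unfold both sides of (\ref{eq:SQPrelation}) into relations that express $Y$ linearly in the inputs $U$ and the outputs $Y$, and then to match the two coefficient matrices one subsystem at a time; the engine is the dynamical-structure-function construction of (\ref{eq:WV})--(\ref{eq:QP}), applied not to $G$ but to each subsystem $S_k$ regarded as a system in its own right, together with admissibility of the partition underlying $\cal S$, which is what forces a subsystem's ``internal'' signal structure to coincide with the restriction of $G$'s signal structure. From the LFT dynamics (as in the proof of Theorem \ref{SQPtoG}), $Y=S\pi$ and $\pi=LU+KY$, so $Y=SLU+SKY$; hence $S[\,L\mid K\,]$ is a coefficient matrix of a relation expressing $Y$ in terms of $U$ and $Y$, and since $S=\mathrm{diag}(S_1,\dots,S_q)$ and $Y$ partitions into the output blocks $Y(S_1),\dots,Y(S_q)$, the rows of $S[\,L\mid K\,]$ indexed by $Y(S_k)$ are $S_k[\,L\mid K\,]_{\pi_k}$, where $[\,L\mid K\,]_{\pi_k}$ is the submatrix of $[\,L\mid K\,]$ realizing the interconnection inputs $v_k$ fed to $S_k$ as a function of $U$ and $Y$. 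On the signal side, reading $\bar{Q}$ (and hence $Q_{int}$, $Q_{ext}$) as $p\times p$ by zero-padding the columns indexed by $y_2$-variables --- legitimate because those variables never appear as arguments in (\ref{eq:DSF}) --- equation (\ref{eq:DSF}) gives $Y=\bar{Q} Y+\bar{P} U$, and splitting $\bar{Q}=Q_{int}+Q_{ext}$ yields $(I-Q_{int})Y=\bar{P} U+Q_{ext}Y$, so $(I-Q_{int})^{-1}[\,\bar{P}\mid Q_{ext}\,]$ is another such coefficient matrix, with $I-Q_{int}$ invertible for the same reason $I-\hat{D}$ is in (\ref{eq:QP}). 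Since $Y=G(s)U$, the bare fact that both sides of (\ref{eq:SQPrelation}) are coefficient matrices of relations of this form does \emph{not} by itself make them equal, so the following structural step is where the content lies.

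Fix a subsystem $S_k$; its interconnection inputs $v_k$ are built from certain inputs $u_l$ and certain outputs $y_j$ produced by other subsystems. Admissibility says every edge of $\cal C$ joining a node of $S_k$ to a node outside $S_k$ carries a manifest variable, so no hidden-variable edge crosses the boundary of $S_k$; consequently any route through hidden variables that avoids all manifest outputs --- the routes recorded by the dynamical structure function --- and that terminates at some $y_i\in Y(S_k)$ must lie entirely among the nodes of $S_k$, entering $S_k$ (if it begins outside) only along the manifest signal it starts with, which is then one of $v_k$. Hence the dynamical structure function of $S_k$ taken alone (outputs $Y(S_k)$, inputs $v_k$) has $Q$-matrix $Q^{(k)}:=\bar{Q}[Y(S_k),Y(S_k)]$, the $k$-th diagonal block of $Q_{int}$, and a $\bar{P}$-matrix, call it $P^{(k)}$, whose columns reproduce $\bar{Q}_{ij}=[Q_{ext}]_{ij}$ for the external outputs $y_j$ occurring in $v_k$ and $\bar{P}_{il}$ for the inputs occurring in $v_k$; the entries $\bar{Q}_{ij},\bar{P}_{il}$ for signals absent from $v_k$ vanish, again by admissibility, matching the zeros already built into $Q_{ext}$ and $\bar{P}$. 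The linear-fractional relation recovering a transfer function from its dynamical structure function (cf.\ (\ref{eq:DSF})) gives $S_k=(I-Q^{(k)})^{-1}P^{(k)}$, and right-multiplying by $[\,L\mid K\,]_{\pi_k}$ (which re-expresses $v_k$ through $U$ and $Y$) then gives
\[
S_k[\,L\mid K\,]_{\pi_k}=(I-Q^{(k)})^{-1}\left[\begin{array}{c|c}\bar{P}[Y(S_k),\cdot] & Q_{ext}[Y(S_k),\cdot]\end{array}\right],
\]
where $[Y(S_k),\cdot]$ denotes the rows indexed by $Y(S_k)$.

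Stacking this identity over $k=1,\dots,q$ completes the proof: the left-hand sides stack to $S[\,L\mid K\,]$, while $Q_{int}$ being block diagonal with blocks $Q^{(k)}$ makes $(I-Q_{int})^{-1}$ block diagonal with blocks $(I-Q^{(k)})^{-1}$, so the right-hand sides stack to $(I-Q_{int})^{-1}[\,\bar{P}\mid Q_{ext}\,]$, which is (\ref{eq:SQPrelation}). The step deserving the most care --- and the main obstacle --- is the claim of the middle paragraph that the dynamical structure function of $S_k$ computed in isolation equals the restriction of $G$'s signal structure to the variables of $S_k$: one must show that restricting the minimal-intricacy realization and the coordinate change (\ref{eq:weaksystem}) to the nodes of $S_k$ commutes with the $W,V,Q,P$ construction, and it is exactly here that admissibility is indispensable. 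The remaining bookkeeping --- identifying manifest auxiliary interconnection variables with the outputs they coincide with upon passing to minimal-intricacy coordinates, handling the static $y_2$-rows, and handling interconnection signals that occur with repetition in $v_k$ (harmless, since both $S_k$ and the corresponding direct-dependence entries sum the repeated contributions identically) --- is routine.
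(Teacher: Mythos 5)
Your proposal is correct and, at bottom, follows the same route as the paper's own argument: both rest on the observation that the signal-structure relation $Y=\bar{Q}Y+\bar{P}U$ from (\ref{eq:DSF}), after solving out only the within-subsystem entries $Q_{int}$, and the LFT relation $Y=S(LU+KY)$ express one and the same map from $(U,Y)$ to $Y$. The difference is one of rigor rather than of strategy. The paper's proof stays at the level of interpretation---it asserts that each side ``describes the closed-loop causal dependency'' of $Y_i$ on the entries of $(U,Y)$, obtained by solving out $Q_{int}$, and stops there---whereas you make the matching precise: you correctly point out that two coefficient matrices in a relation $Y=M\left[\begin{smallmatrix}U\\ Y\end{smallmatrix}\right]$ need not coincide merely because both hold along trajectories (since $Y=GU$ introduces redundancy), and you close this gap subsystem by subsystem, identifying each block $S_k$ with $(I-Q^{(k)})^{-1}P^{(k)}$, the transfer function recovered from the subsystem's own dynamical structure function, and invoking admissibility of the partition (no hidden edge crosses a subsystem boundary, and the interconnection variables coincide with manifest outputs) to argue that this subsystem DSF is exactly the restriction of the global $(\bar{Q},\bar{P})$: the $k$-th diagonal block of $Q_{int}$ together with the corresponding rows of $\bar{P}$ and $Q_{ext}$. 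That restriction step, and the explicit role of admissibility, is precisely what the paper's proof leaves implicit; your zero-padding of the $y_2$ columns of $\bar{Q}$ also repairs a dimensional looseness in the statement of (\ref{eq:SQPrelation}). The only caveat is that you flag rather than fully execute the commutation of the restriction with the $W,V,Q,P$ construction; your path-based justification is the right reason and is already more detailed than the published proof, but a complete write-up would carry out that computation for a single subsystem explicitly.
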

\begin{proof}
Examining relation (\ref{eq:SQPrelation}), observe that the $ij^{\text{th}}$ entry of the left hand side describes the closed loop causal dependency from the $j^{\text{th}}$ entry of $\left[\begin{array}{cc} U^T & Y^T \end{array}\right]^T$ to $Y_i.$ By closed loop, we mean that they do not describe the internal dynamics of each subsystem, e.g. the direct causal dependencies among outputs of a single subsystem. Thus, these closed loop causal dependencies are obtained by solving out the intermediate direct causal relationships, i.e. the entries in $Q_{int}.$  Notice that the right hand side of (\ref{eq:SQPrelation}) also describes the closed loop map from $\begin{bmatrix} U^T & Y^T \end{bmatrix}^T$ to $Y,$  and in particular the $ij^{\text{th}}$ entry of \[ (I-Q_{int})^{-1} \left[ \begin{array}{c|c}  \bar{P}& Q_{ext} \end{array} \right] \]  describes the closed loop causal dependency from the $j$th entry of $\left[\begin{array}{cc} U & Y \end{array}\right]^T$ to $Y_i.$
\end{proof}

As a special case, notice that for SO subsystem structures, $Q_{int}$ becomes the zero matrix and that for subsystem structures with a single subsystem, $S$ becomes the system transfer function, $L$ becomes the identity matrix, $Q_{int} = \bar{Q}$, and $Q_{ext}$ and $K$ are both zero matrices.  The primary import of this result is that a single subsystem structure can be consistent with two or more signal structures and that a single signal structure can be consistent with two or more subsystem structures.  Consider the following examples:
\noindent\begin{example}{\em A Signal Structure consistent with two Subsystem Structures}\\
In this example, we will show how a signal structure can be consistent with two subsystem structures.   To do this we construct two different generalized state-space realizations that yield the same minimal intricacy realization but different admissible partitions (see Definition \ref{admissiblepartition}). The result is two different subsystem structures that are associated with the same signal structure.  First, we consider the complete computational structure ${\cal C}_1$ with generalized state-space realization \[({\bf{A_1},\bf{B_1},\bf{C_1},\bf{D_1}}) =  \left(\left[\begin{array}{cc} A_1 & \hat{A}_1 \\ \bar{A}_1 & \tilde{A}_1\end{array}\right],	\left[\begin{array}{c}B_1 \\ \bar{B}_1 \end{array}\right] , \left[\begin{array}{cc} C_1 & \bar{C}_1 \end{array}\right],D_1 \right) \] where \[ \begin{array}{cc}
 A_1 = \begin{bmatrix}-4 & 1 & 0 & 0 & 1 \\ 1 & -7 & 0 & 0 & 3 \\ 0 & 0 & -6 & 0 & 0 \\ 0 & 0 & 0 & -6 & 0 \\ 1 & 2 & 0 & 0 & -10 \end{bmatrix}, &  
 
\hat{A}_1 = \begin{bmatrix} 0 & 0 & 2 & 1 \\ 0  & 0 & 2 & 1 \\ 2 & 1 & 0 & 1 \\ 1 & 2 &2 & 0 \\ 0 & 0 & 0 & 0 \end{bmatrix},  \\\noalign{\medskip}  
\bar{A}_1 = \begin{bmatrix} 1 & 0 & 0 & 0 & 0 \\ 0 & 1 & 0 & 0 & 0 \\ 0 & 0 & 1 & 0 & 0 \\ 0 & 0 & 0 & 1 & 0 \\ 0 & 0 & 0 & 0 & 0 \end{bmatrix}, & 
 
\tilde{A}_1 = \left[ 0 \right]_{4},\\\noalign{\medskip} 
B_1 = \left[\begin{array}{ccccc}1 & 1 & 1 & 1 &1 \end{array}\right]^T, &   \bar{B}_1 = \left[\begin{array}{cccc}0&0&0&0\end{array}\right],\\\noalign{\medskip} 
C_1 = \bf{0}_{4\times 5} ,&  \bar{C}_1 = I_4, 
\end{array}
 \] and $D_1 = \bf{0}_{4\times 1}.$  Figure \ref{fig:c1}  shows the computational structure ${\cal C}_1.$  
\begin{figure}[htb!]
\centering
\subfigure[The complete computational structure ${\cal C}_1$ with generalized state-space realization $({\bf{A_1},\bf{B_1},\bf{C_1},\bf{D_1}}).$ We have omitted the self-loops on each of the $f$ vertices for the sake of visual clarity.]{
    \includegraphics[width=\columnwidth]{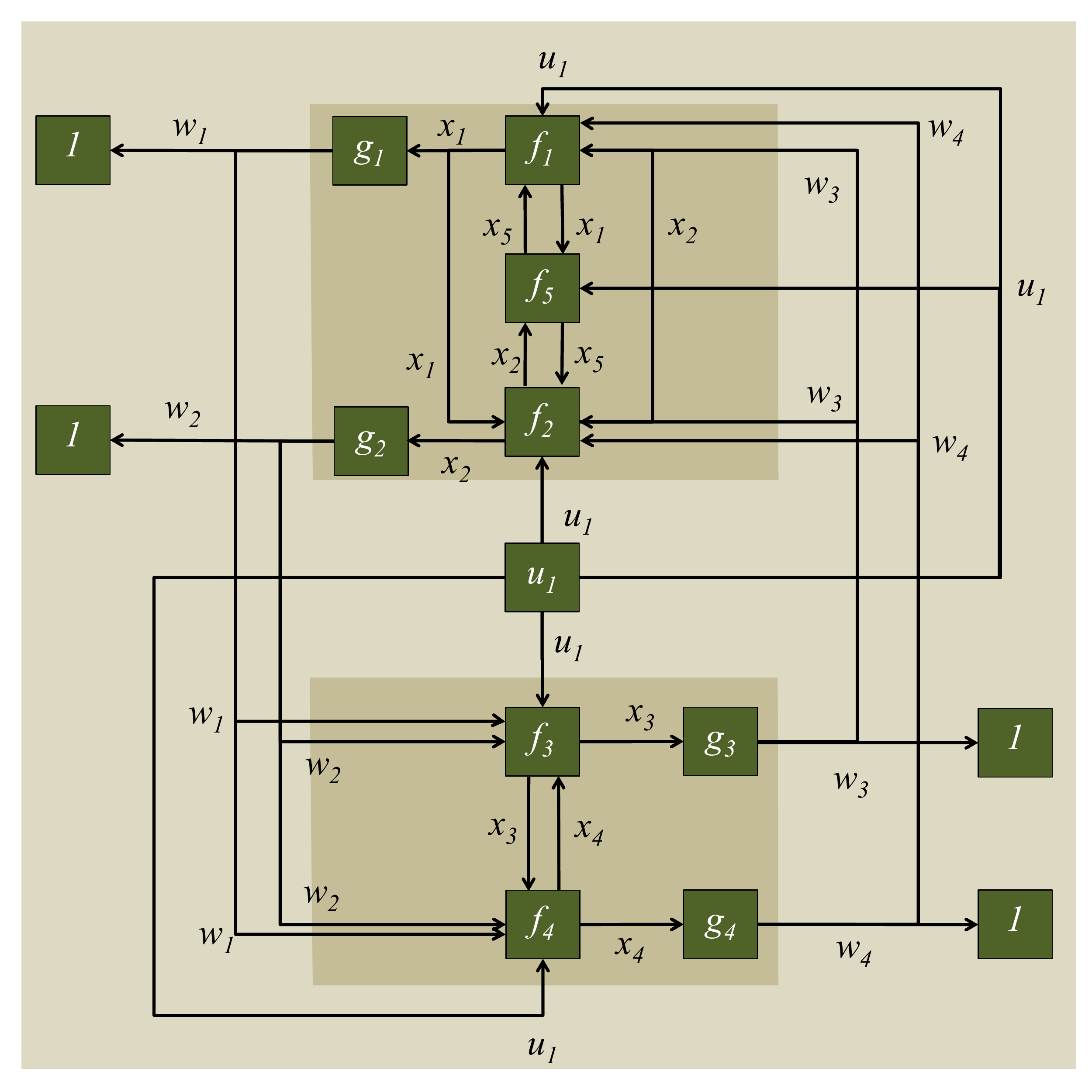}
    \label{fig:c1}}
\subfigure[The complete computational structure ${\cal C}_2$ with generalized state-space realization $({\bf{A_2},\bf{B_2},\bf{C_2},\bf{D_2}}).$ We have omitted the self-loops on each of the $f$ vertices for the sake of visual clarity.]{
    \includegraphics[width=\columnwidth]{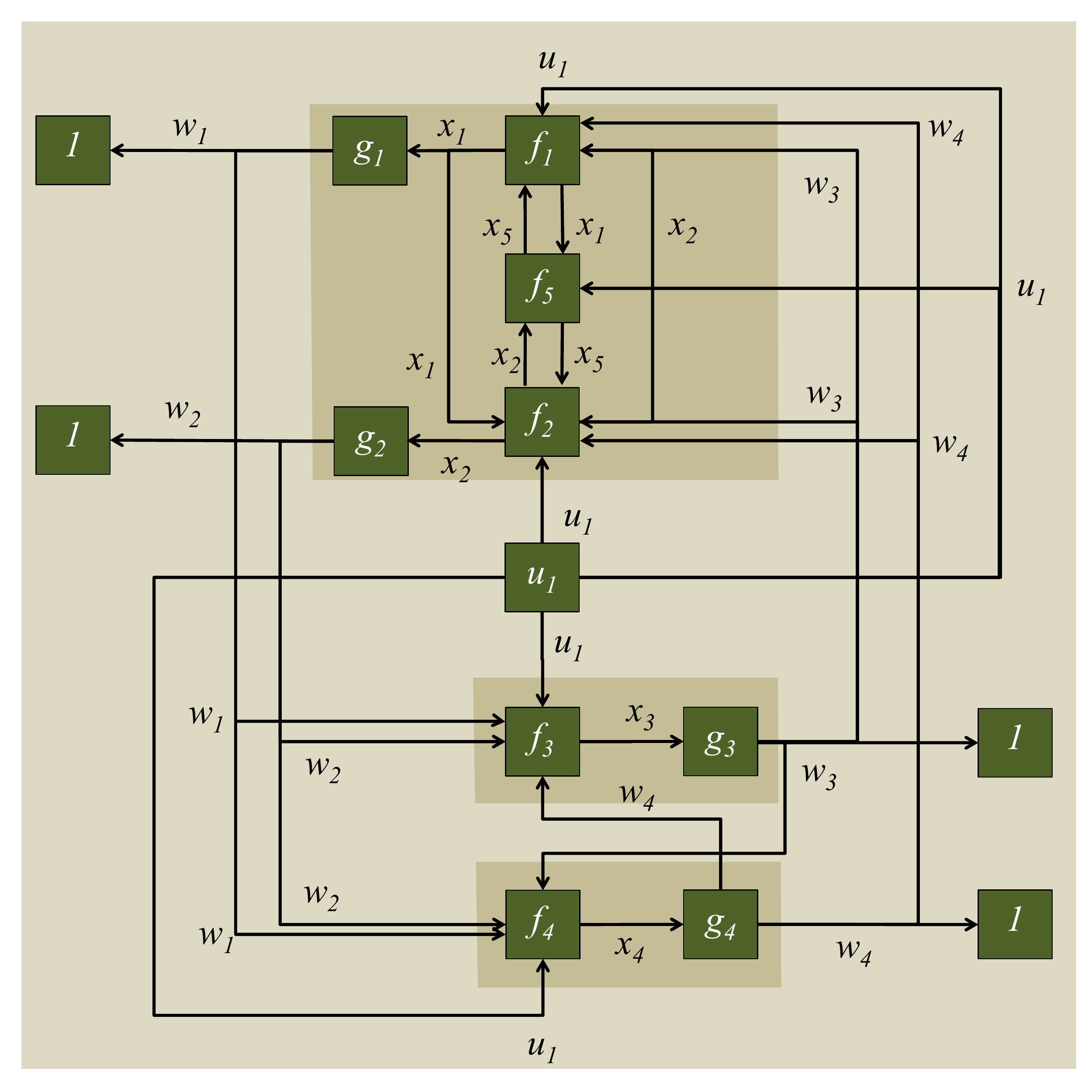}
    \label{fig:c2}}
    \caption{The complete computational structures of two systems that differ only by a slight rearrangement of their state-space dynamics.     The rearrangement results in two different subsystem structure representations.   Nonetheless, the resulting minimal intricacy realization for both systems is the same, implying that both systems have an identical signal structure.  }
\end{figure}

Next consider the complete computational structure ${\cal C}_2$ with generalized state-space realization \[({\bf{A_2},\bf{B_2},\bf{C_2},\bf{D_2}}) =  \left(\left[\begin{array}{cc} A_2 & \hat{A}_2 \\ \bar{A}_2& \tilde{A}_2\end{array}\right],	\left[\begin{array}{c}B_2 \\ \bar{B}_2 \end{array}\right] , \left[\begin{array}{cc} C_2 & \bar{C}_2 \end{array}\right],D_2 \right). \] The matrix entries of this generalized state-space realization are given as follows: 
\[ \begin{array}{ccl}
 A_2 = \begin{bmatrix}-4 & 1 & 0 & 0 & 1 \\ 1 & -7 & 0 & 0 & 3 \\ 0 & 0 & -6 & 1 & 0 \\ 0 & 0 & 2 & -6 & 0 \\ 1 & 2 & 0 & 0 & -10 \end{bmatrix}, &  
 & 
\hat{A}_2 = \begin{bmatrix} 0 & 0 & 2 & 1 \\ 0  & 0 & 2 & 1 \\ 2 & 1 & 0 & 0 \\ 1 & 2 & 0 & 0 \\ 0 & 0 & 0 & 0 \end{bmatrix}  \\ \end{array} \]
\[ \begin{array}{ccl}\bar{A}_2 =\bar{A}_1, & 
 & 
\tilde{A}_2 = \left[ 0 \right]_{4},\\\noalign{\medskip} 
B_2 = B_1 = \bf{1}_{5\times 1}, & &  \bar{B}_2 = \bar{B}_1 = \bf{0}_{4\times 1},\\\noalign{\medskip} 
C_2 = C_1, & & \bar{C}_2 = \bar{C}_1, 
\end{array}
 \] and $D_2 = D_1.$  Figure \ref{fig:c2}  shows the computational structure ${\cal C}_2.$  The difference between these two computational structures is evident more in the subsystem structure representation of the system - note how replacing $A_1$ with $A_2$, essentially externalizes internal dynamics.  The result is that ${\cal C}_2$ admits a subsystem structure ${\cal S}_2$ which  divides one of the subsystems of ${\cal S}_1$ into two subsystems.  This is more apparent in the LFT representations of ${\cal S}_1$ and ${\cal S}_2;$ the LFT representation of ${\cal S}_1$ is given by ${\cal F}(N_1,S_1) = $ \[ N_1 = \left[\begin{array}{c|c} \bf{0}_{4\times 1}  & \bf{I}_{4} \\\hline \begin{array}{c}0\\0\\1\\0\\ 0\\0\\ 1\\ 0\\ 0\\ 0\\ 1 \end{array} & 
\begin{array}{cccc}0 & 0  & 1 & 0 \\ 0 & 0 &  0 & 1  \\ 0 & 0 & 0 & 0 \\ 1 & 0 & 0 & 0 \\ 0 & 1 & 0 & 0 \\ 0 & 0 & 0 & 1 \\ 0 & 0 & 0 & 0 \\ 1 & 0 & 0 & 0 \\ 0 & 1 & 0 & 0 \\ 0& 0 & 1 & 0 \\ 0 & 0 & 0 & 0 \end{array} 
\end{array}\right] \] 
and \[\begin{array}{c}  S_1(s)  = \begin{bmatrix} S_{11}(s) & 0 & 0 \\ 0 & S_{12}(s) & 0 \\ 0 & 0 & S_{13}(s) \end{bmatrix},\end{array} \] with $S_{11}(s), S_{12}(s),S_{13}(s)$ given by \[ \begin{array}{c} \\\noalign{\medskip}
\begin{bmatrix} \frac{2(s^2+18s+76)}{s^3+21s+130s+234} & \frac{s^2+18s+76}{s^3+21s+130s+234} & \frac{s^2+19s+86}{s^3+21s^2+130s+234} \\ \frac{2(s^2+15s+52)}{s^3+21s^2+130s+234} & \frac{s^2+15s+52}{s^3+21s^2+130s+234} & \frac{(13+s)(s+5)}{s^3+21s^2+130s+234} \end{bmatrix}, \\\noalign{\medskip} \begin{bmatrix}  \frac{2}{s+6} & \frac{1}{s+6} & \frac{1}{s+6} & \frac{1}{s+6}  \end{bmatrix},\\\noalign{\medskip} \begin{bmatrix}  \frac{1}{s+6} & \frac{2}{s+6} & \frac{2}{s+6} & \frac{1}{s+6}\end{bmatrix} \end{array}  \] respectively.

The LFT representation of ${\cal S}_2(s)$ is represented as the LFT ${\cal F}(N_2,S_2) = $ where \[ N_2 = \left[ \begin{array}{c|c} \bf{0}_{4\times 1} & \bf{I_4} \\\hline \begin{array}{c} 0\\0 \\ 1\\0\\0\\1\end{array} & \begin{array}{cccc} 0 & 0 & 1&0\\0&0&0&1 \\ 0&0&0&0 \\ 1&0&0&0\\0&1&0&0\\0&0&0&0 \end{array}\end{array}\right] \] and $S_2(s), S_{21}(s),S_{22}(s)$ respectively given as \[\begin{array}{c}\begin{bmatrix}S_{21}(s) & 0  \\ 0 & S_{22}(s) \end{bmatrix} ,  \\\noalign{\medskip}  \begin{bmatrix} \frac{2(s^2+18s+76)}{s^3+21s^2+130s+234} & \frac{s^2+18s+76}{s^3+21s^2+130s+234} & \frac{s^2+19s+86}{s^3+21s^2+130s+234} \\ \frac{2(s^2+15s+52)}{s^3+21s^2+130s+234} & \frac{s^2+15s+52}{s^3+21s^2+130s+234} & \frac{(13+s)(s+5)}{s^3+21s^2+130s+234}\end{bmatrix} ,\\\noalign{\medskip} \begin{bmatrix} \frac{2s+13}{s^2+12s+34} & \frac{s+8}{s^2+12s+34} & \frac{7+s}{s^2+12s+34} \\ \frac{s+10}{s^2+12s+34} & \frac{2(7+s)}{s^2+12s+34} & \frac{s+8}{s^2+12s+34} \end{bmatrix}. \end{array}\]  

However, if we consider the minimal intricacy realizations of ${\cal C}_1, {\cal C}_2$ we get the same state-space realization $(A_o,B_o,C_o,D_o)$ with \[ \begin{array}{cc}A_o = \begin{bmatrix} -4 &1&2&1&1 \\1&-7&2&1&3\\2&1&-6&1&0\\1&2&2&-6&0\\1&2&0&0&-10 \end{bmatrix},& B_o = \begin{bmatrix} 1 \\ 1\\ 1\\ 1\\ 1 \end{bmatrix} \\\noalign{\medskip} C = \begin{bmatrix} I_4 & \bf{0}_{4\times 1}\end{bmatrix} \end{array} \]  The signal structure of the system is thus specified by the dynamical structure function $(Q,P)(s),$ with \[ \begin{array}{c} Q(s) = \begin{bmatrix} 0 & \frac{12+s}{s^2+14s+39} & \frac{2(s+10)}{s^2+14s+39} & \frac{s+10}{s^2+14s+39} \\ \frac{13+s}{s^2+17s+64} & 0 & \frac{2(s+10)}{s^2+17s+64} & \frac{s+10}{s^2+17s+64} \\ \frac{2}{s+6} & \frac{1}{s+6} & 0 & \frac{1}{s+6} \\ \frac{1}{s+6} & \frac{2}{s+6} &\frac{2}{s+6} & 0 \end{bmatrix}  \\\noalign{\medskip} P(s) = \begin{bmatrix} \frac{11+s}{s^2+14s+39} \\ \frac{13+s}{s^2+17s+64} \\ \frac{1}{s+6} \\ \frac{1}{s+6} \end{bmatrix} \end{array}\]
\end{example}

\noindent \begin{example}{\em A Subsystem Structure consistent with two Signal Structures} \\
Now we consider a subsystem structure with multiple signal structures.  Recalling the discussion above, subsystem structure describes the {\em closed loop} causal dependencies between manifest interconnection variables and signal structure specifies direct causal dependencies between manifest variables.  When it is impossible to determine these direct causal dependencies by inspection from the closed loop causal dependencies in a subsystem structure representation, then there can be multiple signal structures that are consistent with the subsystem structure.  

Reconsider ${\cal S}_2$.   The LFT is given by ${\cal F}(N_2,S_2) $ where \[ N_2 = \left[ \begin{array}{c|c} \bf{0}_{4\times 1} & \bf{I_4} \\\hline \begin{array}{c} 0\\0 \\ 1\\0\\0\\1\end{array} & \begin{array}{cccc} 0 & 0 & 1&0\\0&0&0&1 \\ 0&0&0&0 \\ 1&0&0&0\\0&1&0&0\\0&0&0&0 \end{array}\end{array}\right] \] and \[S_2 = \begin{bmatrix}S_{21} & 0  \\ 0 & S_{22} \end{bmatrix},\] with $S_{21}$ and $S_{22}$ given by  \[ \begin{array}{c}  \begin{bmatrix} \frac{2(s^2+18s+76)}{s^3+21s^2+130s+234} & \frac{s^2+18s+76}{s^3+21s^2+130s+234} & \frac{s^2+19s+86}{s^3+21s^2+130s+234} \\ \frac{2(s^2+15s+52)}{s^3+21s^2+130s+234} & \frac{s^2+15s+52}{s^3+21s^2+130s+234} & \frac{(13+s)(s+5)}{s^3+21s^2+130s+234}\end{bmatrix} ,\\\noalign{\medskip}\begin{bmatrix} \frac{2s+13}{s^2+12s+34} & \frac{s+8}{s^2+12s+34} & \frac{7+s}{s^2+12s+34} \\ \frac{s+10}{s^2+12s+34} & \frac{2(7+s)}{s^2+12s+34} & \frac{s+8}{s^2+12s+34} \end{bmatrix} \end{array}\] respectively. 
  
If we consider the relation  \[ S_2\left[\begin{array}{cc} K & L \end{array} \right] = (I-Q_{int})^{-1}\left[\begin{array}{cc} Q_{ext} & P \end{array} \right],\] we can take $(Q,P)(s)$ to equal  \[ \begin{array}{c} Q_1(s) = \begin{bmatrix} 0 & \frac{12+s}{s^2+14s+39} & \frac{2(s+10)}{s^2+14s+39} & \frac{s+10}{s^2+14s+39} \\ \frac{13+s}{s^2+17s+64} & 0 & \frac{2(s+10)}{s^2+17s+64} & \frac{s+10}{s^2+17s+64} \\ \frac{2}{s+6} & \frac{1}{s+6} & 0 & \frac{1}{s+6} \\ \frac{1}{s+6} & \frac{2}{s+6} & \frac{2}{s+6} & 0 \end{bmatrix}  \\\noalign{\medskip} P_1(s) = \begin{bmatrix} \frac{11+s}{s^2+14s+39} \\ \frac{13+s}{s^2+17s+64} \\ \frac{1}{s+6} \\ \frac{1}{s+6} \end{bmatrix} \end{array} \] with \[Q_{int} \equiv \left[ \begin {array}{cccc}  0 &{\frac { 12 +s}{{s}^{2}+ 14  s+ 39 }}& 0 & 0 \\\noalign{\medskip}{\frac { 13 +s}{{s}^{2}+ 17  s+ 64 }}& 0 & 0 & 0 \\\noalign{\medskip} 0 & 0 & 0 & \frac{1}{s+6}\\\noalign{\medskip} 0 & 0 & \frac{2}{s+6}& 0 \end {array} \right] \] and $Q_{ext} \equiv Q_1-Q_{int},$
or $Q_2(s)=$ \[   \begin{array}{c}\left[ \begin {array}{cccc}  0& 0&  {\frac {2({s}^{2}+ 18 s+ 76)}{{s}^{3}+ 21 {s}^{2}+ 130  s+ 234 }}&{\frac {{s}^{2}+ 18s+ 76 }{{s}^{3}+ 21  {s}^{2}+ 130s+ 234 }}\\\noalign{\medskip} 0& 0&  {\frac {2(52 + 15s+s^{2})}{{s}^{3}+ 21{s}^{2}+ 130s+ 234 }}&{\frac {52 + 15s+{s}^{2}}{{s}^{3}+ 21s^{2}+ 130s+ 234 }}\\\noalign{\medskip}{\frac { 2s+ 13 }{{s}^{2}+ 12s+ 34 }}&{\frac {s+ 8 }{{s}^{2}+ 12s+ 34 }}& 0 & 0 \\\noalign{\medskip}{\frac {s+ 10 }{{s}^{2}+ 12s+ 34 }}&   {\frac {2(7 +s)}{{s}^{2}+ 12s+ 34 }}& 0 & 0 \end {array} \right]   \\\noalign{\medskip} 
P_2(s) =\left[\begin{array}{c}
{\frac {{s}^{2}+ 19  s+ 86 }{{s}^{3}+ 21  {s}^{2}+ 130  s+ 234 }}\\\noalign{\medskip} 
{\frac { \left(  13 +s \right)  \left( s+ 5  \right) }{{s}^{3}+ 21  {s}^{2}+ 130  s+ 234 }}\\\noalign{\medskip} 
{\frac { 7 +s}{{s}^{2}+ 12  s+ 34 }}\\\noalign{\medskip} 
{\frac {s+ 8 }{{s}^{2}+ 12  s+ 34 }} \end{array} \right] \end{array} \] and taking $Q_{int} \equiv \left[\bf{0} \right]$ and $Q_{ext} \equiv Q_2.$  It is routine to show that with these definitions, both $(Q_1,P_1)(s)$ and $(Q_2,P_2)(s)$ are consistent with $S_2.$  Thus a single subsystem structure can be consistent with two signal structures.  It is important to note that one of our signal structures exhibited direct causal dependencies corresponding exactly with the closed loop dependencies described by subsystem structure. We note that this correspondence sometimes occurs (as in the case of Theorem \ref{SpecialMISO}) but generally does not hold.
\end{example}

\section{Impact on Systems Theory}

The introduction of partial structure representations suggests new problems in systems theory.  These problems explore the relationships between different representations of the same system, thereby characterizing properties of the different representations.  For example, classical realization theory considers the situation where a system is specified by a given transfer function, and it explores how to construct a consistent state space description.  Many important ideas emerge from the analysis:
\begin{enumerate}
\item State realizations are generally more informative than a transfer function representation of a system, as there are typically many state realizations consistent with the same transfer function,
\item Order of the state realization is a sensible measure of complexity of the state representation, and there is a well-defined {\em minimal} order of any realization consistent with a given transfer function; this minimal order is equal to the Smith-McMillian degree of the transfer function,
\item Ideas of controllability and observability of a state realization characterize important properties of the realization, and any minimal realization is both controllable and observable.
\end{enumerate}
In a similar way, introducing partial structure representations impacts a variety of concepts in systems theory, including realization, minimality, and model reduction.

\subsection{Realization}
The definition of partial structure representations enrich the kinds of realization questions one may consider.  In the previous section, we demonstrated that partial structure representations of a system are generally more informative than its transfer function but less informative than its state realization.  Thus, two classes of representation questions emerge: reconstruction problems and structure realization problems (Figure \ref{fig:problems}).

\begin{figure}[thb]
\centering
  \includegraphics*[width=.45\textwidth]{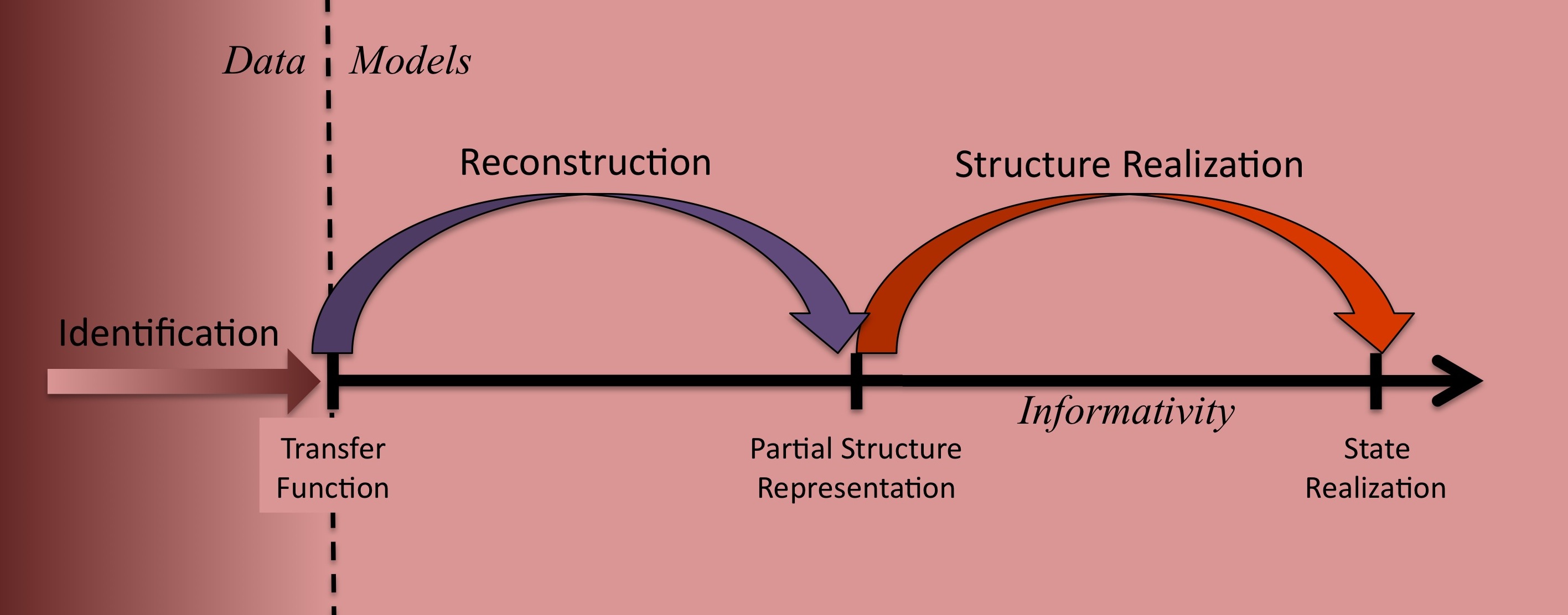}
\caption{Partial structure representations introduce new classes of realization problems: reconstruction and structure realization.  These problems are distinct from identification, and each depends on the type of partial structure representation considered.} 
 \label{fig:problems}     
\end{figure}

Reconstruction problems consider the construction of a partial structure representation of a system given its transfer function.  Because partial structure representations are generally more informative than a transfer function, these problems are ill-posed--like the classical realization problem.  Nevertheless, one may consider algorithms for generating canonical representations, or one may characterize the additional information about a system--beyond knowledge of its transfer function--that one would require in order to specify one of its partial structure representations.  In particular, we may consider two types of reconstruction problems:
\begin{enumerate}
\item {\bf Signal Structure Reconstruction:} Given a transfer function $G(s)$ with its associated sparsity structure $\cal Z$, find a signal structure, $\cal W$, with dynamical structure function $(Q, P)$ as in Equation (\ref{eq:DSF}) such that $G=(I-Q)^{-1}P$,
\item {\bf Subsystem Structure Reconstruction:} Given a transfer function $G(s)$ with its associated sparsity structure $\cal Z$, find a subsystem structure, $\cal S$, with structured LFT $(N,S)$ as in Equation (\ref{eq:LFT}) such that $G = (I-SK)^{-1}SL.$
\end{enumerate}
Signal structure reconstruction is also called {\em network reconstruction}, particularly in systems biology where it plays a central role.  There, the objective is to measure fluctuations of various proteins, or other chemical species, in response to particular perturbations of a biochemical system, and then infer causal dependencies among these species \cite{ourBook,Rangel2005,yuancdc09,ourTrans,yuanMTNS10,yuanbook09}. 

Structure realization problems then consider the construction of a state space model, possibly generalized to include auxiliary variables as necessary, consistent with a given partial structure representation of a system.   Like the classical realization problem or reconstruction problems, these problems are also ill-posed since there are typically many state realizations of a given partial structure representation of a system.  Also, like reconstruction problems, these realization problems can be categorized in two distinct classes, depending on the type of partial structure representation that is given: 
\begin{enumerate}
\setcounter{enumi}{2}
\item {\bf Signal Structure Realization:} Given a system $G$ with signal structure $\cal W$ and associated dynamical structure function $(Q,P)$, find a state space model $(A,B,C,D)$ consistent with $(Q,P)$, i.e. such that Equations (\ref{eq:laplace}-\ref{eq:DSF}) hold,
\item {\bf Subsystem Structure Realization:} Given a system $G$ with subsystem structure $\cal S$ and associated structured $(N,S)$ (recorded in the LFT representation ${\cal F}(N,S)$), find a generalized state space model of the form of Equation (\ref{eq:linearsystem}) consistent with $(N,S)$.
\end{enumerate}
Signal structure realization may sometimes be called {\em network realization}, consistent with the nomenclature for signal structure reconstruction.

Note that all the reconstruction and structure realization problems here are distinct from identification problems, just as classical realization is different from identification.  For the systems considered here, identification refers to the use of input-output data (and no other information about a system) to choose a representation that best describes the data in some appropriate sense.  Because input-output data only characterizes the input-output map of a system, identification can at best characterize the system's transfer function; no information about structure, beyond the sparsity structure, is available in such data.  In spite of this distinction, however, it is not uncommon for reconstruction problems to be called {\em structure identification} problems.  Nevertheless, one may expose such problems as the concatenation of an identification and a reconstruction problem and precisely characterize the extra information needed to identify such structure by carefully analyzing the relevant reconstruction problem, independent of the particular identification technique \cite{ourTrans,ourBook}.

\subsection{Minimality}
Just as partial structure representations enrich the classical realization problem, they also impact the way we think about minimality.  Certainly the idea of a minimal complexity representation is relevant for each of the four problems listed above, but clearly the relevant notion of complexity may be different depending on the representation.  We consider each situation as follows:
\begin{enumerate}
\item {\bf Minimal Signal Structure Reconstruction:}  In this situation one needs to consider how to measure the complexity of a system's signal structure, $\cal W$, or its associated dynamical structure function, $(Q, P)$.  Some choices one may consider could be the number of edges in $\cal W$, the maximal order of any element in $(Q, P)$, or the maximal order of any path from any input $u_i$ to any output $y_j$.  The problem would then be to find a minimal complexity signal structure consistent with a given transfer function.
\item {\bf Minimal Subsystem Structure Reconstruction:} In this situation one needs to consider how to measure the complexity of a system's subsystem structure, $\cal S$, or its associated structured LFT, $(N,S)$.  One notion could be to measure complexity by the number of distinct subsystems; the problem would then be to find the minimal complexity subsystem structure representation consistent with a given transfer function.  Another notion could be the number of non-zero entries in the $S$ matrix, where $(N,S)$ denote the the LFT associated with the subsystem structure.  Using this measure, a single subsystem block with no zero entries would be considered a more complex representation than a subsystem structure with a large number of distinct, albeit interconnected, subsystems. 
 \item {\bf Minimal Signal Structure Realization:} In this situation one needs to consider how to measure the complexity of a system's zero-intricacy state realization, from which signal structure is derived.  The obvious choice would be to use the order of the realization as a natural measure of complexity, and the problem would then be to find the minimal order state realization \cite{yuancdc09,yuanITAC11} consistent with a given signal structure, $\cal W$, or, equivalently, with its associated dynamical structure function, $(Q,P)$.  Note that this minimal order is guaranteed to be finite (for the systems considered here) and can easily be shown to be greater or equal to the Smith-McMillian degree of the transfer function specified by the signal structure; we call this number the {\em structural degree} of the signal structure \cite{yeungcdc}. 
\item {\bf Minimal Subsystem Structure Realization:} In this situation one needs to consider how to measure the complexity of a generalized state realization in the presence of auxiliary variables.  Both the order and the intricacy of the realization offer some perspective of its complexity, but one needs to consider how they each contribute to the overall complexity of the realization.  The problem would then be to find a minimal complexity generalized state realization consistent with a given subsystem structure.  
\end{enumerate}  

These various problems demand new ideas for thinking about the complexity of a system's representation, especially that of a partial structure representation.  These new ideas about complexity, in turn, introduce opportunities for characterizing minimality of a representation in terms that add insight to our understanding of the relationship between a system's behavior and its structure, much as controllability and observability characterize classical notions of minimality in a system's state realization.  Besides suggesting the need for a characterization of minimality, however, these ideas also impact notions of approximation and how we think about model reduction.

\subsection{Model Reduction}
Each of the reconstruction and structure realization problems described above have associated with them not only a minimal-representation version of the problem, but also an approximate-representation version of the problem.  The minimal-representation versions of these problems, as described above, seek to construct  a representation of minimal complexity in the targeted class that is nevertheless consistent with the system description provided.  Similarly, approximate-representation versions of these problems seek to construct a representation in the targeted class that has a lower complexity than the minimal complexity necessary to deliver consistency with the system description provided.  As a result, consistency with the given system description can not be achieved, so measures of approximation become necessary to sensibly discuss a ``best" representation of the specified complexity.  

For example, associated with the classical realization problem is the standard model reduction problem.  In this situation, a transfer function is specified, and one would like to construct a state realization with a complexity that is lower than that which is necessary (for such a realization to be consistent with the given transfer function) that nevertheless ``best" approximates it.  Note that the appropriate notion of complexity depends on the target representation class; here, the target representation class is the set of state realizations, so the appropriate notion of complexity is model order.  Likewise, note that the appropriate notion of approximation depends on the type of system representation that is initially provided; here, a transfer function is provided, so an appropriate measure of approximation could be an induced norm, such as $H_{\infty}$.  Thus, one could measure the quality of an approximation by measuring the induced norm of the error between the given transfer function and that specified by the approximate state realization.  Note that since the $H_{\infty}$ model reduction problem remains open, many alternative measures and approaches have been suggested \cite{dullerudpaganini,megretski06,ZDG96}.  In any event, because the specified system description is a transfer function, the resulting measure of approximation is typically one that either directly or indirectly measures the difference in input-output dynamic behavior between the approximate model and the given system description; the focus is on dynamics, not system structure, when considering notions of approximation in the standard model reduction problem.  

Similarly, there is an appropriate reduction problem associated with each of the minimal reconstruction and minimal structure realization problems described above.  Like standard model reduction, the appropriate notion of complexity is characterized by the class of target representations, while the appropriate measure of approximation depends on the system representation that is initially provided.  To make these ideas more concrete, we discuss each of the problems individually:
\begin{enumerate}
\item {\bf Approximate Signal Structure Reconstruction:}  In this situation one would like to find a signal structure representation with lower complexity e.g. fewer number of edges, etc. than the minimal level of complexity necessary to specify a given transfer function.   When using the number of edges as a complexity measure, this problem may be interpreted as trying to restrict the number of communication links between individual computation nodes in a distributed system while approximating some desired global dynamic.  The appropriate measure of approximation, then, is any measure that sensibly compares the desired transfer function from that specified by the reduced signal structure, such as $H_{\infty}$. 
\item {\bf Approximate Subsystem Structure Reconstruction:}  In this situation one would like to find a subsystem structure representation with lower complexity than the minimal level of complexity necessary to specify a given transfer function.  If one considers the number of subsystems as the measure of complexity, then this problem is trivial since any transfer function is consistent with a subsystem structure with a single subsystem.  If one measures complexity by the number of non-zero entries in the $S$ matrix, where $(N,S)$ denote the the LFT associated with the subsystem structure, then it appears likely that one could often formulate a meaningful approximation problem.
\end{enumerate}

Unlike these approximate reconstruction problems, approximate realization problems may have dual relevant measures of approximation, since having an initial partial structure representation of a system also always specifies its transfer function.  Measuring the similarity between transfer functions determines the degree to which a lower order system approximates the {\em dynamics} of a given system, while measuring the similarity between partial structure representations determines the degree to which a lower order system approximates the {\em structure} of a given system.     

\begin{enumerate}
\setcounter{enumi}{2}
\item {\bf Approximate Signal Structure Realization:}  In this situation one would like to find a state realization with a model complexity that is lower than the minimal complexity necessary to specify a given signal structure.  Typically the complexity of a generalized state realization would be measured by both the order and intricacy of the realization.  Since signal structure only depends on the zero intricacy realization of a system, however, a minimal complexity realization will always have zero intricacy; thus model order is the only relevant notion of complexity.  Moreover, since the structural degree of a particular signal structure may be strictly greater than the Smith-McMillian degree of the transfer function it specifies, a few distinct kinds of approximation problems emerge: structural approximation and dual approximation (Figure \ref{fig:assr}).
\begin{enumerate}
\item {\bf Structural Approximation:}  When the order of the approximation is less than the structural degree of the given signal structure but greater than the Smith-McMillian degree of the associated transfer function, the resulting approximation can specify the given transfer function exactly, even though its signal structure only approximates that of the given system.  Note that the sense in which similarity in signal structure should be measured is an area of on-going research.  
\item {\bf Dual Approximation:} When the order of the approximation is less than the Smith-McMillian degree of the transfer function specified by the given signal structure, then it will  represent neither the structure nor the dynamics of the given system exactly.
 \end{enumerate} 
\begin{figure}[htb]
\centering
  \includegraphics*[width=.45\textwidth]{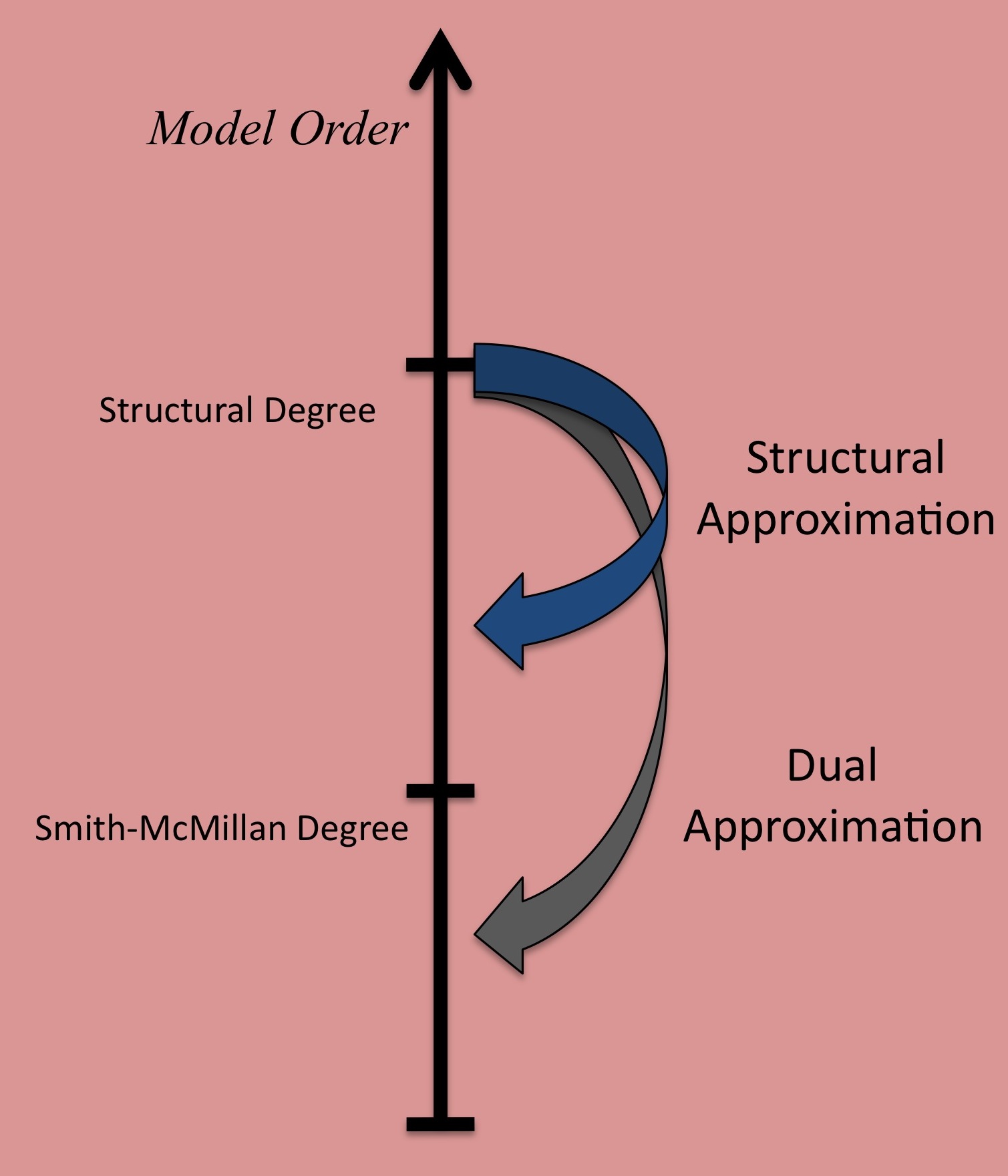}
\caption{Approximate signal structure realization leads to two distinct types of reduction problems.  Structural approximation exactly realizes the dynamics of a given signal structure while only approximating its structure; dual approximation captures neither the dynamics nor the structure of a given signal structure exactly.} 
 \label{fig:assr}     
\end{figure}
\item {\bf Approximate Subsystem Structure Realization:} In this situation one would like to find a generalized state realization with a model complexity that is lower than the minimal complexity necessary to specify a given subsystem structure.  Here, complexity of a generalized state realization is measured both in terms of intricacy and order since intricacy of a realization directly impacts the number of admissible subsystem blocks (see Figure \ref{fig:linear_example}) while order impacts the ability of the realization to approximate the transfer function specified by the given subsystem structure.  As a result, three distinct approximation problems emerge to complement subsystem structure realization:  Structure-Preserving Model Reduction, Subsystem Structure Approximation, Subsystem Dual Approximation (Figure \ref{fig:mr}).
\begin{enumerate}
\item {\bf Structure-Preserving Model Reduction: } When the intricacy of an approximation is high enough, the subsystem structure of a system can be preserved while its dynamic behavior is approximated by lower-order systems.  A naive approach to such reduction would be to reduce the order of various subsystems.  However, even when each subsystem is well approximated, the dynamic behavior of the overall interconnection may be very different from the original system.  As a result, methods have been developed in this area for approximating the dynamics of the closed-loop, interconnected system while preserving its subsystem structure \cite{sandberg+07b,san+08a,san+08b}.    
\item {\bf Subsystem Structure Approximation:}   When the order of an approximation is high enough, the dynamic behavior of a system can be preserved while is subsystem structure is approximated by lower-intricacy realizations.  The sense in which similarity of subsystem structure should be measured remains an open topic for research.
\item {\bf Subsystem Dual Approximation:}  When both the order and the intricacy of an approximation are lower than the minimal values necessary to realize a given subsystem structure and the transfer function it specifies, then the objective of the reduction problem is to find the realization of the specified complexity that best approximates the structure and dynamics of the given system.
\end{enumerate}   
\end{enumerate}

\begin{figure}[htb]
\centering
  \includegraphics*[width=.45\textwidth]{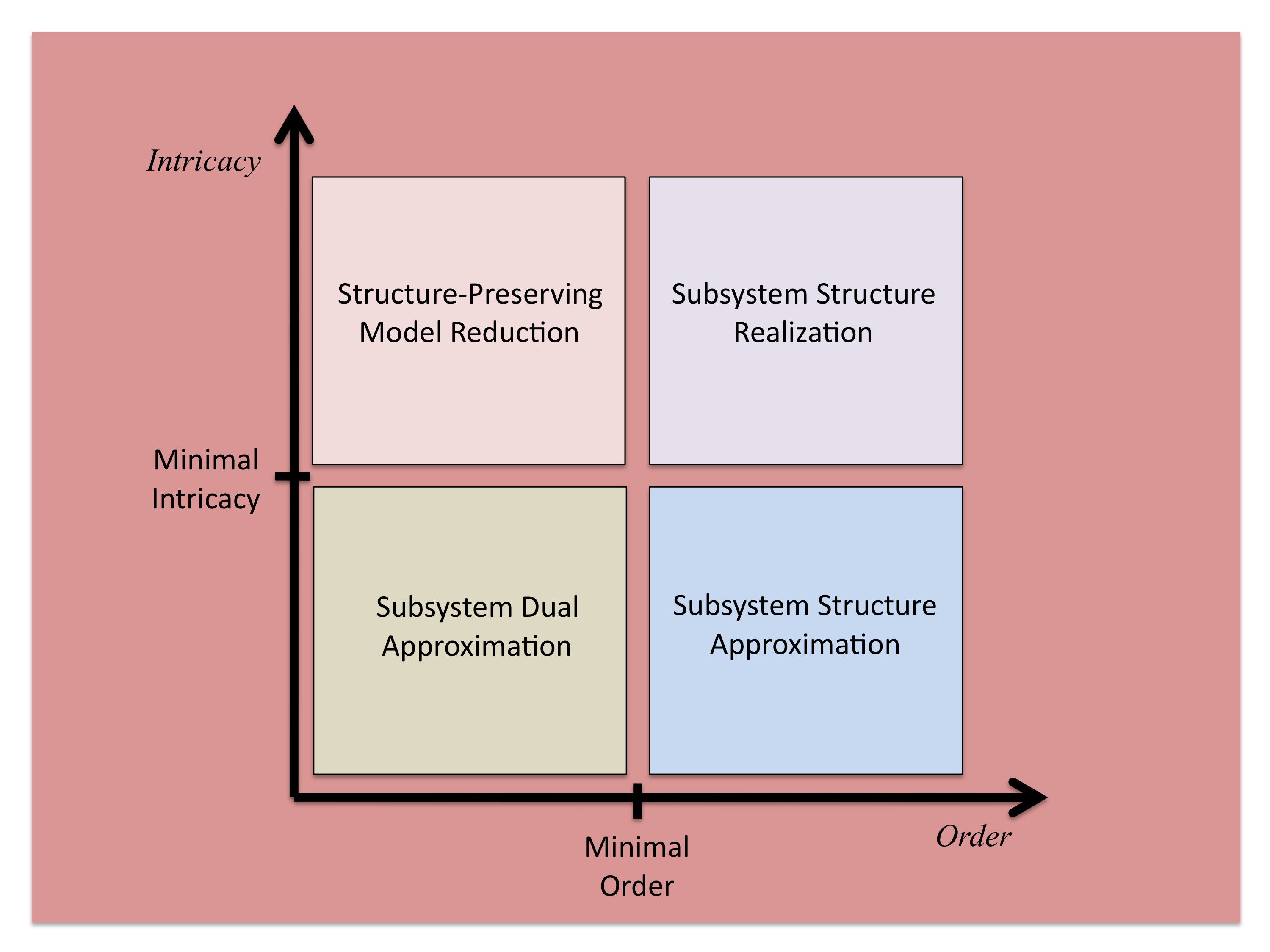}
\caption{Approximate subsystem structure realization leads to three distinct types of reduction problems: 1) Structure-preserving model reduction preserves the subsystem structure of a given system while approximating its dynamic behavior, 2) Subsystem structure approximation preserves the dynamic behavior but approximates the subsystem structure, and 3) Subsystem dual approximation captures neither the structure nor the dynamic behavior of a system described by a particular subsystem structure.} 
 \label{fig:mr}     
\end{figure}

The introduction of partial structure representations suggests a number of new problems in systems theory.  These problems include new classes of realization problems, called reconstruction and structural realization problems, as well as a number of new reduction problems.  Each of these problems differ depending on the partial structure representation one considers, and a number of research issues remain to properly formulate most of them.  The overview offered here is merely meant to give a perspective of the landscape of problems that emerges with the introduction of partial structure representations.

 \section{Conclusion}

This paper introduced the idea of a system's complete computational structure as a baseline for comparing simplified representations.  Although closely aligned with a system's state space realization, we demonstrated the need for auxiliary variables to encode information about a system's admissible subsystem structure.  This results in a ``generalized" state description that is a differential algebraic system with differentiation index zero, and the number of auxiliary variables becomes an additional measure of complexity (besides model order) that we call {\em intricacy}.  This generalized state representation, and its associated graphical representation, is the system's complete computational structure.

Partial structure representations were then introduced as a means for simplifying a system's structural description while retaining a complete representation of its input-output dynamic behavior.  These included subsystem structure, signal structure, and sparsity structure.

Subsystem structure was first introduced as the most refined view of the interconnection of a system's legitimate subsystems.  This description is represented as the lower linear fractional transformation of a static interconnection matrix $N$ with a block diagonal operator $S$.  Its graphical representation is a block diagram, like the system's complete computational structure, that is a condensation graph with respect to a meaningful partition of the system's states. 

Signal structure, on the other hand, is a signal flow diagram of the causal dependencies among manifest variables given by the dynamical structure function of the system.  We demonstrated that systems may exhibit extremely structured behavior in the signal structure sense while having no apparent structure in the subsystem or computational structure sense.  Moreover, the transformations between manifest variables, represented as edges in the signal structure graph, do not necessarily partition the system states, as do the nodes of the subsystem structure.  This fact implies that the minimal order to realize a particular signal structure may be, in fact, higher than the minimal order necessary to realize the transfer function specified by the given signal structure.

Finally, the weakest notion of structure is sparsity structure, or the pattern of zero entries in the system's transfer function matrix, which graphically also becomes a signal flow graph like signal structure.  We demonstrated that this representation is very weak, reminding readers that a diagonal transfer function matrix, for example, does not imply that even a minimal realization of the system is necessarily decoupled.  Thus, this notion of structure really is a statement about the closed-loop dependencies of inputs on outputs of the system and much less of a statement about how those closed-loop dependencies came to be.

These representations were then shown to contain differing levels of structural information about a system.  In particular, it was shown that the complete computational structure uniquely specifies both the subsystem and signal structure of a system, and that either of these partial structure representations uniquely specify the transfer function (and thus its associated sparsity structure) of the system.  Nevertheless, the relationship between subsystem and signal structure is less definitive, as we demonstrated that two realizations of the same system may share subsystem structure but have different signal structures, or, conversely, two realization of the same system may share signal structure but have different subsystem structures.  These different representations simply appear to offer different perspectives of the system's structural properties.

We then surveyed the landscape of new problems in systems theory that arise when considering these partial structure representations.  In particular, we showed that the standard realization problem now becomes two types of problems, a reconstruction problem, where a transfer function is given and one would like to determine a partial structure representation that is consistent with it, and a structure realization problem, where a partial structure representation is given and one would like to find a generalized state realizations that is consistent with it.  Minimal versions of these problems are obtained if one can define a sensible notion of complexity of each kind of representation, and various suggestions were offered.  Associated approximation, or model reduction problems were then characterized, where the target representation is simpler than the minimal complexity necessary to yield a representation that is consistent with the given model or description of the system.  Here we note that the model reduction problems begin to consider structure approximation as well as approximation of the system's dynamic behavior, leading to a variety of new problems one may consider.  It is our hope that  many of these problems will be addressed in the coming years, leading to a more thorough and complete understanding of the meaning of structure and its relationship to the behavior of interconnected dynamic systems.

\section{Acknowledgments}
This work was supported in part by a Charles Lee Powell Foundation Fellowship, the Air Force Research Laboratory Grant FA 8750-09-2-0219 and by the Engineering and Physical Sciences Research Council Grant EP/G066477/1.
\bibliography{../../library}
\end{document}